\DeclareSymbolFont{cyrletters}{OT2}{wncyr}{m}{n}
\DeclareMathSymbol{\Sha}{\mathalpha}{cyrletters}{"58}
\definecolor{refkey}{rgb}{1,1,1}
\definecolor{labelkey}{rgb}{1,1,1}
\definecolor{cite}{rgb}{0.9451,0.2706,0.4941}
\definecolor{ruri}{rgb}{0.0078,0.4022,0.8010}
\theoremstyle{plain}
\newtheorem{theorem}{Theorem}[section]
\newtheorem{proposition/example}[theorem]{Proposition/Example}
\newtheorem{proposition}[theorem]{Proposition}
\newtheorem{corollary}[theorem]{Corollary}
\theoremstyle{definition}
\newtheorem{definition}[theorem]{Definition}
\newtheorem{example}[theorem]{Example}
\newtheorem{conjecture/question}[theorem]{Conjecture/Question}
\newtheorem{remark/definition}[theorem]{Remark/Definition}
\newtheorem{definition/notation}[theorem]{Definition/Notation}
\theoremstyle{remark}
\numberwithin{equation}{section}
\def\shownotes{\def\inline##1##2##3{ \begin{adjustwidth}{3mm}{7mm}\mbox{}\par \noindent
{\color{##1}\hspace{-1.9cm}{\large ##2}\vspace{-\baselineskip}\\##3}
\newline\end{adjustwidth}} \def\inlinewide##1##2##3{ \begin{adjustwidth}{0mm}{0cm}\mbox{}\par \noindent
{\color{##1}\hspace{-1.6cm}{\large ##2}\vspace{-\baselineskip}\\##3}
\newline\end{adjustwidth}}  \def\marg##1##2##3{\marginnote{\color{##1}{\large ##2}\\{\small ##3}}[-.8cm]}}
\begin{document}
\title {Symplectic Group and Heisenberg Group in $\emph{p}$-adic Quantum Mechanics}
\author{Sen Hu \& Zhi Hu}
\address{School of Mathematics, University of Science and
Technology of China\\Hefei, 230026, China}
 \email{{\tt
shu@ustc.edu.cn, halfash@mail.ustc.edu.cn}}
\maketitle
\begin{abstract}
This paper  treats mathematically some problems in $p$-adic quantum mechanics. We first deal with $p$-adic symplectic group corresponding to the symmetry on the classical phase space. By the filtrations of isotropic subspaces and almost self-dual lattices in the $p$-adic symplectic vector space, we explicitly give the expressions of parabolic subgroups, maximal compact subgroups and corresponding Iwasawa decompositions of some symplectic groups. For a triple of Lagrangian subspaces, we associated it with a quadratic form whose Hasse invariant is  calculated. Next we study the various equivalent realizations of unique irreducible and admissible representation of  $p$-adic Heisenberg group. For the Schr\"{o}dinger representation, we can define Weyl operator and its kernel function, while for the induced representations from the characters of maximal abelian subgroups of Heisenberg group generated by the isotropic subspaces or self-dual lattice in the $p$-adic symplectic vector space, we calculate the  Maslov index defined  via the  intertwining operators corresponding to the representation transformation operators in quantum mechanics.
\end{abstract}
\tableofcontents
\section{Introduction}
It seems that there is no  prior reason for opposing the fascinating idea that  at the very small (Planck) scale the geometry of the
spacetime should be non-Archimedean. There should be quantum
fluctuations not only of topology and geometry but even of the number field.
Therefore, it was suggested the following number field invariance principle:
Fundamental physical laws should be invariant under the change of the number field\cite{v}. One could start from the ring of integers or global fields,
then the local fields should appear through a mechanism
of number field symmetry breaking, similar to the Higgs mechanism\cite{vk}.

Physics defined over $p$-adic number field has been studied in various different contexts.
In $p$-adic open string theory, the interior of the $p$-adic worldsheet is regarded as  a simple
discrete lattice or tree which can be easily embedded in a real spacetime, even though its boundary
is identified with $\mathbb{Q}_p$. A lattice action on this tree can be written down in direct analogy to that in the ordinary  string theory. The corresponding
scattering amplitudes have obtained by Brekke et al.\cite{bfw}. The tachyon effective action which  reproduces correctly all the tree level amplitudes involving the tachyon provides a non-local field theory\cite{fg}. The amplitudes for the closed $p$-adic string can be obtained by  replacing the
integrals over $\mathbb{C}$ in the usual closed string theory  with integrals over a $p$-adic quadratic extension\cite{fo}. The investigation of $p$-adic quantum mechanics began in the pioneer  paper \cite{vv} of Vladimirov and Volovich. The authors developed the corresponding
formalism of $p$-adic quantum mechanics with complex wave functions in the framework of Weyl representation of  Heisenberg group. Khrennikov
 constructed  $p$-adic
quantum mechanics with $p$-adic valued wave functions later on\cite{kh}. Since then, Volovich and collaborators have conducted some exploratory research on  $p$-adic and adelic physics\cite{vk}.

This paper will treat mathematically some problems in $p$-adic quantum mechanics. In Sect. 2, we recall briefly some basic $p$-adic analysis. Next we deal with $p$-adic symplectic group corresponding to the symmetry on the classical phase space. By the filtrations of isotropic subspaces and almost self-dual lattices in the $p$-adic symplectic vector space, we explicitly give the expressions of parabolic subgroups, maximal compact subgroups and corresponding Iwasawa decompositions of some symplectic groups. For a triple of Lagrangian subspaces, we associated it with a quadratic form whose Hasse invariant is  calculated. In the last section, we study the various equivalent realizations of unique irreducible and admissible representation of  $p$-adic Heisenberg group. For the Schr\"{o}dinger representation, we can define Weyl operator and its kernel function, while for the induced representations from the characters of maximal abelian subgroups of Heisenberg group generated by the isotropic subspaces or self-dual lattice in the $p$-adic symplectic vector space, we calculate the  Maslov index defined  via the  intertwining operators corresponding to the representation transformation operators in quantum mechanics.
\section{Brief Review of $p$-adic Analysis}
The details of $p$-adic analysis can be found in \cite{k,vvz,ak}. We only recall some basic materials used in this paper.
Let $p$ be a prime positive integer number, the
field $\mathbb{Q}_p$ of $p$-adic numbers is the completion of the rational number field $\mathbb{Q}$ with respect to the $p$-adic norm $|\cdot|_p$.  Any nonzero element $x\in\mathbb{Q}_p$ can be written uniquely as $x=p^k\sum_{n\geq0} x_np^n$ where $x_n\in\{0,\cdots,p-1\},x_0\neq0$ and $k\in \mathbb{Z}$, and the $p$-adic norm of $x$ is given by $|x|_p=\frac{1}{p^k}$.  The subset $\mathbb{Z}_p$ of $\mathbb{Q}_p$ defined as $\mathbb{Z}_p=\{x\in\mathbb{Q}_p:|x|_p\leq1\}=\{x=\sum_{n\geq0}x_np^n:x_n\in\{0,\cdots,p-1\}\}=\underleftarrow{\lim}_{n}\mathbb{Z}/p^n\mathbb{Z}$ is an integral  domain called the ring  of $p$-adic integers. The ring $\mathbb{Z}_p$  is a principal ideal domain, more precisely, its ideals
are the principal ideals $\{0\}$ and $p^k\mathbb{Z}_p (k\in\mathbb{N})$. In particular, $\mathfrak{p}=p\mathbb{Z}_p$ is the unique maximal ideal of $\mathbb{Z}_p$, the corresponding residue field is $\mathbb{F}_p=\mathbb{Z}_p/p\mathbb{Z}_p$.
 $\mathbb{Q}_p$ is also defined as the fraction field of $\mathbb{Z}_p$, i.e. $\mathbb{Q}_p=\mathbb{Z}_p[\frac{1}{p}]$.  The group multiplicative $\mathbb{Z}_p^\times$ of invertible elements in the ring $\mathbb{Z}_p$ consists of the
$p$-adic integers with unit norm, namely $\mathbb{Z}_p^\times=\{x\in\mathbb{Z}_p: |x|_p=1\}=\{x=\sum_{n\geq0}x_np^n:x_n\in\{0,\cdots,p-1\},x_0\neq0\}$, hence the subset consisting of nonzero $p$-adic number is given by
$\mathbb{Q}_p^\times=\coprod_{m\in\mathbb{Z}}p^m\mathbb{Z}_p^\times$.
For any $p$-adic number $x=\sum_{n\geq 0}x_np^{n+k}$ one defines the fractional part of $x$ as $[x]=\sum_{n=0}^{-k-1}x_np^{n+k}\in\mathbb{Z}[\frac{1}{p}]$, and for any two $p$-adic numbers $x$ and $y$, the difference $[x+y]-[x]-[y]\in \mathbb{Z}_p\cap\mathbb{Z}[\frac{1}{p}]=\mathbb{Z}$.
Call $B(a;n)=\{x\in\mathbb{Q}_p:|x-a|_p\leq p^{-n}\}$ a $p$-adic ball with center $a$. Note that every point in $B(a;n)$ is a center.  The $p$-adic balls are both open and closed sets since $B(a;n)=\{x\in\mathbb{Q}_p:|x-a|_p< p^{-n+1}\}$, and they are disconnected sets since $B(a;n)=\coprod_{x=0}^{p-1}B(a+p^nx;n+1)$.
Moreover one can show that $\mathbb{Q}_p=\cup_nB(a;n)$ is a locally compact and totally disconnected  topological field.

 An additive character $\chi$ on $\mathbb{Q}_p$ is a homomorphism $\chi:\mathbb{Q}_p\rightarrow\mathbb{C}^*$ with the property $\chi(x+y)=\chi(x)\chi(y)$. The group of additive characters of the field $\mathbb{Q}_p$ is isomorphic
to its additive group $\mathbb{Q}_p$, where the isomorphism is given by the mapping $u\mapsto\chi(ux)=\chi_u(x):=\exp(2\pi i[ux])$.  Being locally compact, $\mathbb{Q}_p$ has a real-valued Haar measure, i.e., a translation invariant measure
$dx$ with the property $d(x + a) = dx$. If the Haar measure is  normalized  so that for the compact subring $\mathbb{Z}_p$ it satisfies $\int_{\mathbb{Z}_p}dx=1$, then $dx$ is unique. For any $a\in \mathbb{Q}_p^\times$ we ave $d(ax)=|a|_pdx$. Let $\mathbb{Q}_p^n$ denote the space of product of $n$-copies of  $\mathbb{Q}_p$. The $p$-adic norm of $x=(x_1,\cdots,x_n)\in\mathbb{Q}_p^n$ is given by $|x|_p=\max_{1\leq j\leq n}|x_j|_p$, which is also a non-Archimedean norm. The Haar measure $dx$ on $\mathbb{Q}_p$ can be extended to a translation invariant measure $d^nx = dx^1 \cdots dx^n$ on $\mathbb{Q}_p^n$
in the standard way, which has the properties: $d^n(x+a)=d^nx$ ($a\in\mathbb{Q}_p^n$), $d^n(Ax)=|\det A|_pd^nx$ where $A : \mathbb{Q}_p^n
\rightarrow\mathbb{Q}_p^n$
is a linear isomorphism such that $\det A \neq 0$. Let $K$ be a measurable subset in  $\mathbb{Q}_p^n$, and $L^\alpha(K)$ ($\alpha\geq1$) be a set of all measurable
functions $f:K\rightarrow\mathbb{C}$,  such that $\int_K|f(x)|^\alpha d^nx<\infty$. A function $f\in L^1(\mathbb{Q}_p^n)$ is called integrable if there exists $\lim_{N\rightarrow\infty}\int_{(B(0;-N))^n}f(x)d^nx$. We call this limit an improper integral and denote it as
$\int_{\mathbb{Q}_p^n}f(x)d^nx$. In particular, $\int_{\mathbb{Q}_p}f(x)dx=\sum_{\nu=-\infty}^\infty\int_{|x|_p=p^\nu}f(x)dx$.
We present a general formula of change of variables in integrals.
If $x(y)$ is an analytic diffeomorphism of a clopen (close and open)
set $K_1\subset\mathbb{Q}_p$ onto a clopen set $K\subset\mathbb{Q}_p$, and $x^\prime(y)\neq0, y\in K_1$, then for any $f\in L^1(K)$ we have
$\int_Kf(x)dx=\int_{K_1}f(x(y))|x^\prime(y)|_pdy$.

The following integrals will be  very useful.
\begin{itemize}
  \item  $\int_{|x|_p=p^\nu}dx=p^\nu(1-\frac{1}{p})$.
  \item $\int_{|x|_p=p^\nu}\chi(\xi x)dx=\left\{
                                        \begin{array}{ll}
                                          p^\nu(1-\frac{1}{p}), & \hbox{$|\xi|_p\leq p^{-\nu}$;} \\
                                          -p^{\nu-1}, & \hbox{$|\xi|_p=p^{-\nu+1}$;} \\
                                          0, & \hbox{$|\xi|_p\geq p^{-\nu+2}$.}
                                        \end{array}
                                      \right.
$

  \item $\int_{\mathbb{Q}_p}f(|x|_p)\chi(\xi x)dx=(1-\frac{1}{p})\frac{1}{|\xi|_p}\sum_{\nu\geq0}p^{-\nu}f(\frac{1}{p^{\nu}|\xi|_p})-\frac{1}{|\xi|_p}f(\frac{p}{|\xi|_p})$, $\xi\neq0$.
\item $\int_{\mathbb{Q}_p}\chi(a x^2+bx)dx=\frac{\lambda_p(a)}{\sqrt{|a|_p}}\chi(-\frac{b^2}{4a}), p\neq2$,
where for a non-zero $p$-adic number $a=p^k(a_0+a_1p+\cdots)$, $\lambda_p(a)=\left\{
                                                \begin{array}{ll}
                                                  1, & \hbox{$k$ \textrm{is even};} \\
                                                  (\frac{a_0}{p}) & \hbox{$k$ \textrm{is odd and } $k\equiv1$ (\textrm{mod}4);} \\
                                                  i(\frac{a_0}{p}), & \hbox{$k$ \textrm{is odd and } $k\equiv3$ (\textrm{mod}4),}
                                                \end{array}
                                              \right.
$ with $(\frac{a_0}{p})$ denoteing the Legendre symbol.
\item$\int_{\mathbb{Q}_2}\chi(a x^2+bx)dx=\frac{\lambda_2(a)}{\sqrt{|a|_2}}\chi(-\frac{b^2}{4a})$,
where for a non-zero 2-adic number $a=2^k(1+2a_1+2^2a_2+\cdots)$, $\lambda_2(a)=\left\{
                                                \begin{array}{ll}
                                                 1+(-1)^{a_1}i, & \hbox{$k$ \textrm{is even};} \\
                                                 (-1)^{a_1+a_2} (1+(-1)^{a_1}i) & \hbox{$k$ \textrm{is odd }.}
                                                \end{array}
                                              \right.$
\end{itemize}

The set $L^2(\mathbb{Q}_p^n)$ is the Hilbert space with the scalar product $(f,g)=\int_{\mathbb{Q}_p^n}f(x)\bar g(x)d^nx$ for $f,g\in L^2(\mathbb{Q}_p^n)$ so that $||f||_{L^2}=\sqrt{(f,f)}$. In $L^2(\mathbb{Q}_p^n)$ the Cauchy-Bunjakovsky inequality holds:
$|(f,g)|\leq||f||_{L^2}\cdot||g||_{L^2}$. The Fourier transform $f \rightarrow F[f ]$ maps
$L^2(\mathbb{Q}_p^n)$ onto $L^2(\mathbb{Q}_p^n)$ one-to-one, where $F[f](\xi)=\hat{f}(\xi)=\int_{\mathbb{Q}_p^n}f(x)\chi(\xi \cdot x)d^nx$ and
$F^{-1}[\hat f](x)=f(x)=\int_{\mathbb{Q}_p^n}\hat f(\xi)\chi(-\xi \cdot x)d^n\xi$. Moreover, the Parseval-Steklov equality holds: $(f,g)=(F[f],F[g])$, i.e. the Fourier transform is a unitary
operator in $L^2(\mathbb{Q}_p^n)$.
A complex-valued function $\psi$ defined over $\mathbb{Q}_p^n$ is called locally constant if for any $x\in \mathbb{Q}_p^n$ there exists an integer $l(x)\in \mathbb{Z}$ such that $\psi(x+x^\prime)=\psi(x)$ when $x^\prime\in (B(0;-l(x)))^n$, and $l: \mathbb{Q}_p^n\rightarrow\mathbb{Z}$ is called a characteristic function associated with $\psi$. In particular, characteristic functions themselves are locally constant. We denote by $\mathcal{E}(\mathbb{Q}_p^n)$ the space of locally constant functions over  $\mathbb{Q}_p^n$,  by $\mathcal{D}(\mathbb{Q}_p^n)$ the space  consisting of all compactly supported functions belong to $\mathcal{E}(\mathbb{Q}_p^n)$, and by $\mathcal{D}^*(\mathbb{Q}_p^n)$ the set of all linear functionals on $\mathcal{D}(\mathbb{Q}_p^n)$. Then $\mathcal{D}(\mathbb{Q}_p^n)$ is dense in $L^2(\mathbb{Q}_p^n)$. If $\psi\in\mathcal{D}(\mathbb{Q}_p^n) $, we have constant characteristic functions, i.e. there exists $l\in \mathbb{Z}$ such that $\psi(x+x^\prime)=\psi(x)$ for any $x\in\mathbb{Q}_p^n,  x^\prime\in(B(0;-l))^n$.

\section{$p$-adic Symplectic Group}
Consider  a $2n$-dimensional vector space (classical phase space) $V$  over $\mathbb{Q}_p$, endowed with a non-degenerate alternating bilinear form $J$.  Let $Gr(n,V)$ be the Grassmannian variety of $n$-dimensional subspaces of $V$, and $Lag(n,V)$ be the subvariety of  $Gr(n,V)$ whose points are Lagrangian (i.e. maximal isotropic ) subspaces of $V$ with respect to $J$. For any point $E\in Lag(n,V)$ one define a set  $U_E=\{F\in Gr(n,V):V=E\oplus F\}$.  If  $F\in U_E$ is also in $Lag(n,V)$, once the basis $\{e_1,\cdots,e_n\}$
of $E$ is fixed, then one can choose a basis $\{f_{1},\cdots,f_{n}\}$ for $F_1$, which is determined by conditions $J(e_{i},f_{j})=\delta_{ij}$
for $i,j=1,\cdots,n$. After this choice, $(V,J)$ is identified with $\mathbb{Q}_p^{2n}$ with a standard symplectic form $J_0=\left(
                                                                                                                             \begin{array}{cc}
                                                                                                                               0 & Id_{n} \\
                                                                                                                              -Id_{n}& 0 \\
                                                                                                                             \end{array}
                                                                                                                           \right)
$.  Let $G(n)$ be the symplectic group of level $n$ defined over  $p$-adic field, then the group $G_{\mathbb{Q}_p}(n)$ of $\mathbb{Q}_p$-points of $G(n)$ is given by  $G_{\mathbb{Q}_p}(n)=Sp(((\mathbb{Q}_p)^{2n},J_0);\mathbb{Q}_p):=\{g\in GL(2n;\mathbb{Q}_p):J_0=g^TJ_0 g\}$ whose Lie algebra is denoted by $\mathfrak{g}_{\mathbb{Q}_p}(n)$.

\subsection{Isotropic subspaces and parabolic subgroups}

All maximal tori in  $G_{\mathbb{Q}_p}(n)$ are conjugate. We fix a $\mathbb{Q}_p$-split maximal torus $\mathbb{T}_{\mathbb{Q}_p}^n\simeq(\mathbb{Q}_p^\times)^n$ consisting of the diagonal matrices in $G_{\mathbb{Q}_p}(n)$. One can associate any maximal torus with a Weyl group  $\mathfrak{W}$. For $\mathbb{T}_{\mathbb{Q}_p}^n$ the action of $\mathfrak{W}$ on $\mathbb{T}_{\mathbb{Q}_p}^n$ is generated by
the following transformations\cite{t}:
\begin{align*}
\sigma_i:&\left(
                                                                                                  \begin{array}{cc}
                                                                                                    \textrm{diag}(x_1,\cdots,x_i,x_{i+1},\cdots,x_n)& 0 \\
                                                                                                    0 &  \textrm{diag}(x_1^{-1},\cdots,x_i^{-1},x_{i+1}^{-1},\cdots,x_n^{-1})\\
                                                                                             \end{array}
                                                                                                \right)\\ \ \ \ \ \ \ \ \ \ \ \ \ \ \ \ \  &\mapsto\left(
                                                                                                  \begin{array}{cc}
                                                                                                    \textrm{diag}(x_1,\cdots,x_{i+1},x_i,\cdots,x_n)& 0 \\
                                                                                                    0 &  \textrm{diag}(x_1^{-1},\cdots,x_{i+1}^{-1},x_i^{-1},\cdots,x_n^{-1})\\
                                                                                             \end{array}
                                                                                                \right), (1\leq i\leq n-1)\ \ \ \ \ \ \ \ \ \ \ \ \ \ \ \ \ \ \ \ \ \ \ \ \ \ \ \ \ \ \ \ \ \\
\sigma_n:&\left(
                                                                                                  \begin{array}{cc}
                                                                                                    \textrm{diag}(x_1,\cdots,x_{n-1},x_n)& 0 \\
                                                                                                    0 &  \textrm{diag}(x_1^{-1},\cdots,x_{n-1}^{-1},x_n^{-1})\\
                                                                                             \end{array}
                                                                                                \right)\\ \ \ \ \ \ \ \ \ \ \ \ \ \ \ \ \  &\mapsto\left(
                                                                                                  \begin{array}{cc}
                                                                                                    \textrm{diag}(x_1,\cdots,x_{n-1},x_n^{-1})& 0 \\
                                                                                                    0 &  \textrm{diag}(x_1^{-1},\cdots,x_{n-1}^{-1},x_n)\\
                                                                                             \end{array}
                                                                                                \right).
\end{align*}
Therefore $\mathfrak{W}$ has $2^nn!$ elements.
The simple roots
relative to $\mathbb{T}_{\mathbb{Q}_p}^n$ can be chosen as\cite{t}
\begin{align*}
  \alpha_i:& \left(
                                                                                                  \begin{array}{cc}
                                                                                                    \textrm{diag}(x_1,\cdots,x_i,x_{i+1},\cdots,x_n)& 0 \\
                                                                                                    0 &  \textrm{diag}(x_1^{-1},\cdots,x_i^{-1},x_{i+1}^{-1},\cdots,x_n^{-1})\\
                                                                                             \end{array}
                                                                                                \right) \mapsto x_ix_{i+1}^{-1}, (1\leq i\leq n-1)\\
\alpha_n:&\left(
                                                                                                  \begin{array}{cc}
                                                                                                    \textrm{diag}(x_1,\cdots,x_n)& 0 \\
                                                                                                    0 &  \textrm{diag}(x_1^{-1},\cdots,x_n^{-1})\\
                                                                                             \end{array}
                                                                                                \right) \mapsto x^2_n,
\end{align*}
each of which determines a unipotent subgroup $\mathbb{U}_i$, and then $\mathbb{B}=\mathbb{T}_{\mathbb{Q}_p}^n\mathbb{U}$ is a Borel subgroup  where
$\mathbb{U}$ is generated by all $\mathbb{U}_i$. Any parabolic subgroup of
$G_{\mathbb{Q}_p}(n)$ is conjugate to some standard parabolic subgroup   which is determined
by a  subset of $\{\alpha_1,\cdots,\alpha_n\}$\cite{pv}. They can be parameterized as follows. Take an ordered partition $(n_1,\cdots,n_l)$ of $k (0\leq k\leq n)$, and set
\[
 \mathbb{M}=\left\{\begin{pmatrix}
\,g_1\,{} \\
   &\ddots\,{}\\
     &&g_l\,{} \\
\,&&&A&&&&B\,{} \\
      &&&&(g_1^{-1})^T\,{} \\
   &&&&&\ddots\,{}\\
     &&&&&& (g_l^{-1})^T\,{} \\
&&&C&&& &D\,{}
    \end{pmatrix}:g_i\in GL(n_i,\mathbb{Q}_p),\left(
                                                \begin{array}{cc}
                                                  A & B \\
                                                  C& D \\
                                                \end{array}
                                              \right)
\in G_{\mathbb{Q}_p}(n-k) \right\},\]
then $\mathbb{P}=\mathbb{M}\mathbb{B}$ is a standard  parabolic subgroup\cite{t}. For the partition $k=n_1+\cdots+n_l (n_1\leq n_2\leq\cdots\leq n_l)$, we can consider the filtration $(0\subset W_0\subset\cdots\subset W_{l-1}\subset W_l\subset\cdots\subset W_{2l}=V)$,
where $W_i (i<l)$ is an $n_{i+1}$-dimensional isotropic subspace of $V$ and $W_i(i\geq l)=W_{2l-i-1}^\perp$ is the orthogonal complement in $V$ of $W_{2l-i-1}$ withe respect to $J$,
which induces a filtration on $\mathfrak{g}_{\mathbb{Q}_p}(n)$: $(0\subset\mathcal{W}_{-2l}\subset\cdots\subset\mathcal{W}_{2l}=\mathfrak{g}_{\mathbb{Q}_p}(n))$  determined by $\mathcal{W}_i(W_j)\subset W_{i+j}$.
Then the Lie algebra of corresponding parabolic subgroup $\mathbb{P}$ is exactly $\mathcal{W}_0$.
\begin{example}
Let $A_k$ be a $k$-dimensional  isotropic subspace of $V$ ($k\leq n$), then we have an increasing filtration on $V$: $W_\bullet=(0\subset W_0\subset W_1\subset W_2=V)$ where $W_0=A_k$, $W_1=A_k^\bot:=\{u\in V:J(u,v)=0 \textrm{ for any }v\in A_k\}$.
This filtration induces a filtration on $\mathfrak{g}_{\mathbb{Q}_p}(n)$: $\mathcal{W}_\bullet=(0\subset\mathcal{W}_{-2}\subset\mathcal{W}_{-1}\subset\mathcal{W}_{0}\subset\mathcal{W}_{1}\subset\mathcal{W}_{2}=\mathfrak{g}_{\mathbb{Q}_p})$
where $\mathcal{W}_i$ is determined by $\mathcal{W}_i(W_j)\subset W_{i+j}$.
More explicitly, take $A_k=\textrm{Span}_{\mathbb{Q}_p}\{e_1,\cdots,e_k\}$, $A_k^\bot=\textrm{Span}_{\mathbb{Q}_p}\{e_1,\cdots,e_n,f_{k+1},\cdots f_{n}\}$, then  the elements in $\mathcal{W}_i$ can be represented as the following matrices:
\begin{align*}
  \mathcal{W}_{-2}&=\left\{\left(
                       \begin{array}{cccc}
                         0 & 0 & u & 0 \\
                        0 & 0 & 0 & 0 \\
                        0 & 0 & 0 & 0 \\
                         0 & 0 & 0 & 0 \\
                       \end{array}
                     \right)
: u \textrm{\ is a symmetric }k\times k \textrm{ matrix}\right\}\simeq \textrm{Sym}(k;\mathbb{Q}_p),\\
\mathcal{W}_{-1}&=\left\{\left(
                       \begin{array}{cccc}
                         0 & s & u & v\\
                        0 & 0 & v^T & 0 \\
                        0 & 0 & 0 & 0 \\
                         0 & 0 & -s^T & 0 \\
                       \end{array}
                     \right)
:u \textrm{  is as that in }\mathcal{W}_{-2}, \textrm{ and }s,v\textrm{  are }k\times (n-k) \textrm{ matrices}\right\},\\
\mathcal{W}_{0}&=\left\{
\left(
                       \begin{array}{cccc}
                         p & s & u & v\\
                        0 & q & v^T & w \\
                        0 & 0 & -p^T & 0 \\
                         0 & r & -s^T & -q^T \\
                       \end{array}
                     \right)
:u,s,v \textrm{  are as those  in }\mathcal{W}_{-1},  \  p \textrm{  is a }k\times k \textrm{ matrix},\right.\\& \ \ \ \ \ \left.q \textrm{  is an }(n-k)\times (n-k)\textrm{ matrix} \textrm{ and }w,r\textrm{ are symmetric  }(n-k)\times (n-k) \textrm{ matrices}
\right\},\\
\mathcal{W}_{1}&=\left\{\left(
                       \begin{array}{cccc}
                         p & s & u & v\\
                        t & q & v^T & w \\
                        0 & m & -p^T & -t^T \\
                         m^T& r & -s^T & -q^T \\
                       \end{array}
                     \right)
:u,s,v,p,q,w,r \textrm{  are as those  in }\mathcal{W}_{0},\right.
\\& \ \ \ \ \ \ \ \ \ \ \ \ \ \ \ \ \ \ \ \ \ \ \ \ \ \ \ \ \ \ \ \ \ \ \ \left.t  \textrm{  is an }(n-k)\times k \textrm{ matrix} \textrm{ and } m \textrm{  is a }k\times(n-k) \textrm{ matrix}\right\}.
\end{align*}
We observe that $\mathcal{W}_{-2}$ and $\mathcal{W}_{-1}$ are both nilpotent subalgebras, and $\mathcal{W}_{2}/\mathcal{W}_{1}\simeq\mathcal{W}_{-2}$, $\mathcal{W}_{1}/\mathcal{W}_{0}\simeq\mathcal{W}_{-1}/\mathcal{W}_{-2}$.
The corresponding parabolic subgroup
whose Lie algebra is  $\mathcal{W}_0$ consists of all matrices  that may be written as $\xi=\xi_1\xi_2\xi_3\xi_4$ where
\begin{align*}
  \xi_1&=\left(
       \begin{array}{cccc}
        Id_{k}  & 0& 0 & 0 \\
         0&A & 0 & B \\
         0 & 0 & Id_{k} & 0 \\
         0 & C & 0 &D\\
       \end{array}
     \right)
\textrm{ for }\left(
          \begin{array}{cc}
            A & B \\
            C & D \\
          \end{array}
        \right)
\in G_{\mathbb{Q}_p}(n-k),\\
\xi_2&=\left(
       \begin{array}{cccc}
        g & 0 & 0 & 0 \\
         0 & Id_{n-k} & 0 & 0 \\
         0 & 0 & (g^{-1})^T & 0 \\
         0 & 0 & 0 &  Id_{n-k}\\
       \end{array}
     \right)
\textrm{ for } g\in GL(k,\mathbb{Q}_p),\\
\xi_3&=\left(
       \begin{array}{cccc}
         Id_{k} & E & 0 & F \\
          0& Id_{n-k} & F^T & 0 \\
        0 & 0 & Id_{k} &  0\\
         0 & 0 & -E^T & Id_{n-k}\\
       \end{array}
     \right)
\textrm{ for} \  k\times (n-k) \textrm{ matrices } E,F \textrm{ that satisfy} \  EF^T=FE^T,\\
\xi_4&=Id_{ n}+S \textrm{ with } S\in\mathcal{W}_{-2}.
\end{align*}
\end{example}

\subsection{Lagrangian subspaces and quadratic forms}
Let $\ell_1,\ell_2,\ell_3\in Lag(n,V)$, then one can define a quadratic form $Q$ on $\ell_1\oplus\ell_2\oplus\ell_3$ as follows:
\begin{align*}
  Q(z_1,z_2,z_3)=J(z_1,z_2)+J(z_2,z_3)+J(z_3,z_1)
\end{align*}
for $z_{1,2,3}\in\ell_{1,2,3}$. Assume that $Q$ is non-degenerate.
We denote by  $D(\ell_1,\ell_2,\ell_3)$ and $\mu(\ell_1,\ell_2,\ell_3)$ the determinant and  Hasse invariant of this  quadratic form $Q$ respectively. From the definition we observe  the following obvious properties of $\mu(\ell_1,\ell_2,\ell_3)$.
\begin{proposition}\begin{enumerate}
                    \item $\mu(\ell_1,\ell_2,\ell_3)$ is  $G_{\mathbb{Q}_p}(n)$-invariant.
  \item For any permutation $\mathcal{P}$ of the set $\{1, 2, 3\}$,
  \begin{align*}
   &\mu(\ell_{\mathcal{P}(1)},\ell_{\mathcal{P}(2)},\ell_{\mathcal{P}(3)})
   =(-1,-1)_p^{\textrm{sgn}(\mathcal{P})C_{3n}^2}(-1,D(\ell_1,\ell_2,\ell_3))^{\textrm{sgn}(\mathcal{P})(3n-1)}_p\mu(\ell_1,\ell_2,\ell_3)\\
=&\left\{
    \begin{array}{ll}
      \mu(\ell_1,\ell_2,\ell_3), & \hbox{$p\geq3$ \textrm{and } $n$ \textrm{is odd};} \\
     (-1,D(\ell_1,\ell_2,\ell_3))^{\textrm{sgn}(\mathcal{P})}_p\mu(\ell_1,\ell_2,\ell_3), & \hbox{$p\geq3$ \textrm{and } $n$ \textrm{is even};} \\
     (-1)^{\textrm{sgn}(\mathcal{P})C_{3n}^2}\mu(\ell_1,\ell_2,\ell_3), & \hbox{$p=2$ \textrm{and } $n$ \textrm{is odd};} \\
    (-1)^{\textrm{sgn}(\mathcal{P})C_{3n}^2}(-1,D(\ell_1,\ell_2,\ell_3))^{\textrm{sgn}(\mathcal{P})}_2\mu(\ell_1,\ell_2,\ell_3), & \hbox{$p=2$ \textrm{and } $n$ \textrm{is even}.}
    \end{array}
  \right.
     \end{align*}
  where $(\cdot, \cdot)_p:\mathbb{Q}_p^\times/(\mathbb{Q}_p^\times)^2\times \mathbb{Q}_p^\times/(\mathbb{Q}_p^\times)^2\rightarrow\{1,-1\} $ stands for the Hilbert symbol.
                   \end{enumerate}
\end{proposition}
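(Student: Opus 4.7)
My plan for (1) is immediate: any $g \in G_{\mathbb{Q}_p}(n)$ preserves $J$ and hence sends Lagrangian subspaces to Lagrangian subspaces, and the induced linear isomorphism $\ell_1\oplus\ell_2\oplus\ell_3 \to g\ell_1\oplus g\ell_2\oplus g\ell_3$ intertwines the two associated quadratic forms. So it is an isometry of non-degenerate $\mathbb{Q}_p$-quadratic forms, and since both $D$ and $\mu$ are invariants of the isometry class, they are $G_{\mathbb{Q}_p}(n)$-invariant.

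For (2), I will first observe that the expression $J(z_1,z_2) + J(z_2,z_3) + J(z_3,z_1)$ is manifestly cyclic in its three arguments, so $Q$ (and hence $\mu$) is unchanged under cyclic permutations. For a single transposition, the antisymmetry of $J$ flips the sign of every summand, so the reordered form equals $-Q$ on the same underlying space $\ell_1\oplus\ell_2\oplus\ell_3$. Writing a general $\mathcal{P}$ as an even permutation possibly composed with one transposition, the problem reduces to computing the ratio $\mu(-Q)/\mu(Q)$ on a $3n$-dimensional quadratic space, with the dependence on $\mathcal{P}$ entering only through $\textrm{sgn}(\mathcal{P})$.

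For that last step, I will fix an orthogonal diagonalization $Q \cong \langle a_1, \ldots, a_{3n}\rangle$ with $\prod_i a_i = D$ and expand $(-a_i,-a_j)_p$ by bilinearity of the Hilbert symbol. Counting how each $a_i$ contributes (it pairs with $-1$ exactly $3n-1$ times across the $\binom{3n}{2}$ pairs) and using $\prod_i a_i = D$, I expect
\begin{align*}
\mu(-Q) \;=\; (-1,-1)_p^{\binom{3n}{2}} \, (-1,D)_p^{3n-1} \, \mu(Q),
\end{align*}
which together with the sign-of-$\mathcal{P}$ dichotomy above yields the general formula stated in the proposition. The four special cases are then read off from two elementary facts about Hilbert symbols: $(-1,-1)_p = 1$ for every odd $p$ (every residue in $\mathbb{F}_p$ is a sum of two squares, then apply Hensel), while $(-1,-1)_2 = -1$; and $3n-1$ is even exactly when $n$ is odd, so the factor $(-1,D)_p^{3n-1}$ is trivial in that case. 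The only nonroutine step is the exponent bookkeeping in the Hilbert-symbol expansion of $\mu(-Q)$; everything else is direct.
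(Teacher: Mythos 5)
Your argument is correct. Note first that the paper itself offers no proof of this proposition -- it is introduced with ``from the definition we observe the following obvious properties'' -- so your write-up is the only actual argument on the table, and it supplies exactly the missing details. Part (1) is the right one-line observation (the diagonal action of $g$ is an isometry between the two $3n$-dimensional forms, and $D$, $\mu$ are isometry invariants). For part (2), your reduction is the correct one: cyclic permutations give isometric forms, every odd permutation of $\{1,2,3\}$ is a transposition and replaces $Q$ by $-Q$ under the obvious reordering isomorphism, and the whole content is the scaling identity
\begin{align*}
\mu(-Q)=(-1,-1)_p^{\binom{N}{2}}\,(-1,\det Q)_p^{\,N-1}\,\mu(Q),\qquad N=3n,
\end{align*}
whose exponent bookkeeping you carry out correctly (each $a_k$ lies in exactly $N-1$ of the $\binom{N}{2}$ pairs, and $\prod_k(-1,a_k)_p=(-1,D)_p$). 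The four special cases then follow from $(-1,-1)_p=1$ for odd $p$, $(-1,-1)_2=-1$, and the parity of $3n-1$, exactly as you say. The one point worth flagging is notational rather than mathematical: for your formula to coincide with the one printed in the proposition, $\textrm{sgn}(\mathcal{P})$ in the exponents must be read as the parity of $\mathcal{P}$ valued in $\{0,1\}$ (so that even permutations give trivial correction factors), not as the usual sign in $\{\pm1\}$; this is a defect of the statement as written, and your derivation makes the intended meaning unambiguous. It would also be worth one sentence observing that an odd permutation changes the discriminant to $(-1)^{3n}D$, but since the right-hand side of the stated formula is expressed in terms of $D(\ell_1,\ell_2,\ell_3)$ this causes no inconsistency.
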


   More generally   for an ordered Lagrangian  sequence $(\ell_1,\cdots,\ell_m)$ ($m\geq3$) on can introduce
\begin{align*}
 \mu(\ell_1,\cdots,\ell_m)
  :=\prod_{1\leq i<j<k\leq m }\mu(\ell_i,\ell_j,\ell_k).
\end{align*}

\begin{proposition}
Let $\ell_1,\ell_2,\ell_3,\ell_4\in Lag(1,V)$.
\begin{enumerate}
                 \item If $\ell_1=\ell_a,\ell_2=\ell_b,\ell_3=\ell_c,\ell_4=\ell_d$ where $\{a,b,c,d\}$ forms an  arithmetic sequence with a non-zero common difference, then $\mu(\ell_1,\ell_2,\ell_3,\ell_4)=1$.
 \item If the  non-degenerate quadratic form $Q(\ell_i,\ell_j,\ell_k)$ associated the triple $(\ell_i,\ell_j,\ell_k)$ represents zero for any $1\leq i<j<k\leq4$, then $\mu(\ell_1,\ell_2,\ell_3,\ell_4)=1$.
               \end{enumerate}
\end{proposition}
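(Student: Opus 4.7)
The plan is to show that the $4$-fold Maslov-type product collapses to $1$ by the bimultiplicativity of the Hilbert symbol, after computing $\mu(\ell_i,\ell_j,\ell_k)$ explicitly for each of the four triples.

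Since $\ell_r\in Lag(1,V)$, we have $V=\mathbb{Q}_p^2$ and each $\ell_r$ is a line. I would fix a nonzero $v_r\in\ell_r$, set $[rs]:=J(v_r,v_s)$ (so $[rs]=-[sr]$), and parametrize $z_r=t_r v_r$. The ternary form attached to a triple then takes the concrete shape
\[
Q_{ijk}(t_i,t_j,t_k)=[ij]\,t_it_j+[jk]\,t_jt_k-[ik]\,t_it_k,
\]
whose symmetric Gram matrix has vanishing diagonal and determinant $-[ij][jk][ik]/4$. Hence $D(\ell_i,\ell_j,\ell_k)\equiv -[ij][jk][ik]$ modulo squares. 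Non-degeneracy of $Q_{ijk}$ is then equivalent to $[ij][jk][ik]\ne 0$, i.e.\ to the three lines being pairwise distinct; this is supplied in~(1) by the nonzero common difference of the arithmetic progression, and in~(2) by direct assumption.

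The crucial observation is that $Q_{ijk}$ always represents zero nontrivially: the vector $(z_i,0,0)$ with $z_i\in\ell_i\setminus\{0\}$ satisfies $Q_{ijk}=0$. Therefore $Q_{ijk}$ is a non-degenerate isotropic ternary form over $\mathbb{Q}_p$, and the standard classification of such forms (equivalence to $\langle 1,-1,-D\rangle$) identifies its Hasse invariant with $(-1,-D(\ell_i,\ell_j,\ell_k))_p$. Bimultiplicativity of the Hilbert symbol then gives
\[
\mu(\ell_i,\ell_j,\ell_k)=(-1,-D(\ell_i,\ell_j,\ell_k))_p=(-1,[ij])_p\cdot(-1,[jk])_p\cdot(-1,[ik])_p.
\]

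The proof then finishes by a combinatorial count: each unordered pair $\{r,s\}\subset\{1,2,3,4\}$ is contained in exactly two of the four $3$-element subsets, so in the product
\[
\mu(\ell_1,\ell_2,\ell_3,\ell_4)=\prod_{1\le i<j<k\le 4}\mu(\ell_i,\ell_j,\ell_k)
\]
every Hilbert-symbol factor $(-1,[rs])_p$ appears with multiplicity two and contributes $1$. This yields $\mu(\ell_1,\ldots,\ell_4)=1$ in both~(1) and~(2). The only nontrivial step I anticipate is the identification of the Hasse invariant of an isotropic ternary form with $(-1,-D)_p$; if one prefers to avoid citing it, one can diagonalize $Q_{ijk}$ in situ via $u=t_i+t_j$, $v=t_i-t_j$ followed by completing the square in $t_k$, and read off the diagonal form $\langle 1,-1,-D\rangle$ directly. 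Everything else is pure Hilbert-symbol bookkeeping and is insensitive to the specific configuration of the four lines, which is why both hypotheses (1) and (2) lead to the same conclusion.
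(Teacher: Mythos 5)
Your proof is correct, and for part (1) it takes a genuinely different and more illuminating route than the paper. The key point you found --- that $Q(z_1,z_2,z_3)=J(z_1,z_2)+J(z_2,z_3)+J(z_3,z_1)$ has no diagonal terms in the coordinates $t_r$ (so each coordinate axis is isotropic and every nondegenerate $Q_{ijk}$ automatically represents zero) --- lets you invoke the rank-three criterion $\varepsilon=(-1,-d)_p$ unconditionally, factor $(-1,-D)_p=(-1,[ij])_p(-1,[jk])_p(-1,[ik])_p$, and kill the product because each pair $\{r,s\}$ lies in exactly two of the four triples. This proves the stronger, uniform statement that $\mu(\ell_1,\ell_2,\ell_3,\ell_4)=1$ whenever all four ternary forms are nondegenerate, of which both (1) and (2) are special cases; in particular it shows the arithmetic-progression hypothesis in (1) and the explicit representability hypothesis in (2) are not really needed beyond guaranteeing nondegeneracy. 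The paper instead proves (1) by a direct twelve-factor Hilbert-symbol computation exploiting the specific differences $b-a=c-b=d-c$, collapsing everything to $(4,-3)_p=1$, and proves (2) essentially as you do but treating ``represents zero'' as a hypothesis rather than an automatic feature; it also has to handle the line $\ell_*$ by a separate parametrization, which your basis-free bracket $[rs]=J(v_r,v_s)$ avoids. What the paper's computation buys is independence from the isotropy criterion (it is entirely elementary Hilbert-symbol manipulation); what yours buys is generality and a conceptual explanation of why the answer is $1$. The one step you should make explicit if you write this up is the verification $D(\ell_i,\ell_j,\ell_k)\equiv -[ij][jk][ik]$ modulo squares via the off-diagonal Gram matrix, which you have correctly, and a remark that the sign conventions ($[rs]$ always taken with $r<s$) are consistent across the four triples so that the pairing really contributes perfect squares.
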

\begin{proof}
 For a 2-dimensional symplectic space $(V,J)\simeq(\mathbb{Q}_p^2,\det)$,  any Lagrangian subspace takes the form $\ell_a=\{(x,ax)\in\mathbb{Q}_p^2:x\in \mathbb{Q}_p\}$ with a parameter $a$ or $\ell_*=\{(0,x)\in\mathbb{Q}_p^2:x\in \mathbb{Q}_p\}$. For the triple $(\ell_a=\{(x,ax)\},\ell_b=\{(y,by)\},\ell_c=\{(z,cz)\})$ ($a\neq b, b\neq c, a\neq c$) we have\begin{align*}
      Q(\ell_a,\ell_b,\ell_c)&=(b-a)xy+(c-b)yz+(a-c)xz\\
      &=\frac{b-a}{4}(y-z+x)^2+\frac{c-b}{4}(z-x+y)^2+\frac{a-c}{4}(x-y+z)^2
    \end{align*}
The determinant  and Hasse invariant of $Q(\ell_a,\ell_b,\ell_c)$ read respectively $D(\ell_a,\ell_b,\ell_c)=\frac{1}{a-b}+\frac{1}{b-c}\in \mathbb{Q}_p^\times/(\mathbb{Q}_p^\times)^2$ and $\mu(\ell_a,\ell_b,\ell_c)=(b-a,c-b)_p(c-b,a-c)_p(b-a,a-c)_p$.
Similarly, for the triple $(\ell_a,\ell_b,\ell_*)$ ($a\neq b$) we have
\begin{align*}
  Q(\ell_a,\ell_b,\ell_*)&=(b-a)xy+yz-xz\\
  &=\frac{(y+z+(b-a-1)x)^2}{4}-\frac{(y-z-(b-a+1)x)^2}{4}+(b-a)x^2.
\end{align*}
The corresponding invariants are given by $D(\ell_a,\ell_b,\ell_*)=a-b$, and $\mu(\ell_a,\ell_b,\ell_*)=(b-a,b-a)_p$.

(1) We calculate
\begin{align*}
\mu(\ell_a,\ell_b,\ell_c,\ell_d)
  =&(b-a,c-b)_p(c-b,a-c)_p(a-c,b-a)_p\\
  &(c-b.d-c)_p(d-c,b-d)_p(b-d,c-b)_p\\
  &(c-a,d-c)_p(d-c,a-d)_p(a-d,c-a)_p\\
  &(b-a,d-b)_p(d-b,a-d)_p(a-d,b-a)_p\\
  =&((b-a)(d-c)(a-d)(c-b),(b-a)(d-c)(c-a)(d-b))_p\\
  =&(4,-3)_p=1.
\end{align*}

(2) By the theory of $p$-adic quadratic form\cite{s}, $Q(\ell_a,\ell_b,\ell_c)$ represents $A$ if $\mathbb{Q}_p^\times/(\mathbb{Q}_p^\times)^2\ni A\neq\frac{1}{b-a}+\frac{1}{c-b}$ or $A=\frac{1}{b-a}+\frac{1}{c-b},0$ and $(a-b,a-b)_p(c-b,c-b)_p(a-c,a-c)_p=(b-a,c-b)_p(c-b,a-c)_p(b-a,a-c)_p$, and  $Q(\ell_a,\ell_b,\ell_*)$  represents any number in $\mathbb{Q}_p$. Therefore, we have
\begin{align*}
\mu(\ell_a,\ell_b,\ell_c,\ell_d)
  =&(a-b,a-b)_p^2(c-b,c-b)_p^2(a-c,a-c)_p^2\\
  &(d-c,d-c)_p^2(d-b,d-b)_p^2(a-d,a-d)_p^2\\
  =&1,\end{align*}
and
 \begin{align*}
 &\mu(\ell_a,\ell_b,\ell_c,\ell_*)=
 \mu(\ell_*, \ell_a,\ell_b,\ell_c)=\mu(\ell_a,\ell_b,\ell_*, \ell_c)=\mu(\ell_a,\ell_*,\ell_b,\ell_c)\\
 =&(b-a,c-b)_p(c-b,a-c)_p(a-c,b-a)_p\\
 &(c-b,c-b)_p(c-a,c-a)_p(b-a,b-a)_p\\
 =&(-1,(a-b)(a-c)(c-b))_p^2=1.
\end{align*}
We complete the proof.
\end{proof}

\subsection{Latices and maximal compact subgroups}

If $L$ is a finitely generated $\mathbb{Z}_p$-submodule of $V$ containing a basis of $V$, it is called a lattice in $V$. The dual lattice $L^*$ with respect to $J$ of $L$ is defined as $L^*=\{v\in V: J(v,u)\in \mathbb{Z}_p$ for $\forall u\in L\}$. If $L=L^*$, $L$ is called a self-dual lattice.
Similarly, one can define the  1-almost self-dual lattice $L$ in $(V,J)$ if it contains a self-dual lattice and $J(u,v)\in p^{-1}\mathbb{Z}_p$ for any $u,v\in L$. By choosing a suitable symplectic basis any 1-almost self-dual lattice can be reduced
 to one of  $L_i:=\mathbb{Z}_pe_1\oplus\cdots\oplus\mathbb{Z}_pe_n\oplus\mathfrak{p}^{-1}f_1\oplus\cdots\oplus\mathfrak{p}^{-1}f_i\oplus\mathbb{Z}_pf_{i+1}\oplus\cdots\oplus\mathbb{Z}_pf_n(i=0,\cdots,n)$, whose corresponding dual latices are given by $L_i^*:=\mathfrak{p}e_1\oplus\cdots\oplus\mathfrak{p}e_i\oplus\mathbb{Z}_pe_{i+1}\oplus\cdots\oplus\mathbb{Z}_pe_n\oplus\mathbb{Z}_pf_1\oplus\cdots\oplus\mathbb{Z}_pf_n$. Then we have a flag $L_n^*\subset L_{n-1}^*\subset\cdots\subset L_1^*\subset L_0\subset L_1 \subset\cdots\subset L_{n-1}\subset L_n$ with the property that $L_i/L_{i-1}\simeq L_{i-1}^*/L_i^*\simeq(\mathbb{F}_p)^i$. The stabilizers of $L_i (\textrm{or} \ L_i^*)$ in
$G_{\mathbb{Q}_p}(n)$ are denoted by $G(L_i)($ or $G(L_i^*))$ respectively. They are all  maximal paraholic subgroups, hence are maximal compact subgroups of $G_{\mathbb{Q}_p}(n)$, and they are not conjugate to each other in $G_{\mathbb{Q}_p}(n)$\cite{j}. Each maximal compact subgroup of $G_{\mathbb{Q}_p}(n)$ is conjugate to one of $G(L_i, n)(=G(L_i^*,n))$.  The set  of all
1-almost self-dual  lattices  is given by $\Lambda_1(n)\simeq\bigsqcup_i G_{\mathbb{Q}_p}(n)/G(L_i, n)$. Recursively, one can introduce the $l$-almost self-dual lattice $L$ which contains an $(l-1)$-almost self-dual lattice and satisfies $J(u,v)\in p^{-l}\mathbb{Z}_p$ for any $u,v\in L$. An $l$-almost self-dual lattice is called the pure $l$-almost self-dual lattice  if it is not an $(l-1)$-almost self-dual lattice.

\begin{proposition}\begin{enumerate}
                     \item Let $L$ and $L^\prime$  be pure $l$-almost and $k$-almost ($k\leq l$)  self-dual lattices in a  2-dimensional symplectic space $(V,J)$  respectively, then $L^\prime$ is a sub-lattice of $L$ if and only if there exists $\Theta\in GL(2,\mathbb{Z}_p)$ with $\det\Theta= p^{l-k}$ up to a constant factor that is a unit such that $L=\Theta L^\prime$.
                     \item There is a one-to-one correspondence between  the set $\Lambda_0(2)\simeq G_{\mathbb{Q}_p}(2)/G_{\mathbb{Z}_p}(2)$ (where $G_{\mathbb{Z}_p}(n)=G(n)\cap GL(n,\mathbb{Z}_p)$) of all
self-dual lattices in two dimensional symplectic space $(V,J)$ and $\mathbb{Z}^2\ltimes((\mathbb{Q}_p)^4/\sim)$, where the equivalent relation is defined as follows: $(\mathbb{Q}_p)^4\ni(u,s,r,t)\sim (u^\prime,s^\prime,r^\prime,t^\prime)$ if and only if $u^\prime-u\in \mathbb{Z}_p$, $t^\prime-t\in \mathbb{Z}_p$, $(r^\prime-r)+t(u^\prime-u)\in \mathbb{Z}_p$ and $(s^\prime-s)+(r+r^\prime)(u^\prime-u)\in \mathbb{Z}_p$.
\item There are  one-to-one correspondences between the sets $\Lambda_1(1)$ and $(\mathbb{Z}\ltimes (\mathbb{Q}_p/\mathbb{Z}_p))\sqcup (\mathbb{Z}\ltimes (\mathbb{Q}_p/\mathbb{Z}_p))$, and between the sets $\Lambda_1(1)$ and $(\mathbb{Z}^2\ltimes((\mathbb{Q}_p)^4/\sim))\sqcup\Lambda_0(2)$, where the equivalent relation is defined as follows: $(\mathbb{Q}_p)^4\ni(u,s,r,t)\sim (u^\prime,s^\prime,r^\prime,t^\prime)$ if and only if $u^\prime-u\in \mathbb{Z}_p$, $t^\prime-t\in \mathbb{Z}_p$, $p(r^\prime-r)+t(u-u^\prime)\in \mathfrak{p}$, $p(r^\prime-r)+t^\prime(u-u^\prime)\in \mathbb{Z}_p$ and $(s^\prime-s)+(r+r^\prime)(u-u^\prime)\in\mathbb{Z}_p$.
                   \end{enumerate}

\end{proposition}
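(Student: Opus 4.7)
For claim (1), the key tool is the elementary divisor theorem over the principal ideal domain $\mathbb{Z}_p$. If $L^\prime\subset L$ are rank-2 $\mathbb{Z}_p$-lattices, then there exist a basis $\{e_1,e_2\}$ of $L$ and non-negative integers $a_1,a_2$ such that $\{p^{a_1}e_1, p^{a_2}e_2\}$ is a basis of $L^\prime$; the diagonal matrix $\Theta=\mathrm{diag}(p^{a_1},p^{a_2})$ has entries in $\mathbb{Z}_p$ and realises the inclusion as $L^\prime=\Theta L$. To pin down $\det\Theta$, I would use the symplectic invariant: purity of the $l$-almost self-dual condition on $L$ (resp.\ the $k$-almost self-dual condition on $L^\prime$) forces $v_p(J(e_1,e_2))=-l$ (resp.\ $-k$) for any basis, so the identity $J(p^{a_1}e_1,p^{a_2}e_2)=p^{a_1+a_2}J(e_1,e_2)$ pins down $a_1+a_2=l-k$, i.e.\ $\det\Theta=p^{l-k}$ up to a unit. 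The converse is immediate from the same valuation comparison.

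For claim (2), the bijection $\Lambda_0(2)\simeq G_{\mathbb{Q}_p}(2)/G_{\mathbb{Z}_p}(2)$ rests on two observations: $G_{\mathbb{Q}_p}(2)$ acts transitively on self-dual lattices, since each such lattice admits a symplectic basis and hence can be moved to the standard one by a $\mathbb{Q}_p$-symplectic transformation; and the stabiliser of the standard self-dual lattice is $G(2)\cap GL(4,\mathbb{Z}_p)=G_{\mathbb{Z}_p}(2)$ by definition. For the explicit parametrisation, I would invoke the Iwasawa decomposition $G_{\mathbb{Q}_p}(2)=\mathbb{U}\,\mathbb{T}_{\mathbb{Q}_p}^2\,G_{\mathbb{Z}_p}(2)$ set up in Section 3.1. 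The split rank-2 torus modulo its maximal compact contributes the $\mathbb{Z}^2$-factor, and the four-dimensional unipotent radical (one coordinate per positive root of the symplectic group in rank 2) contributes the $(\mathbb{Q}_p)^4$-factor $(u,s,r,t)$. The equivalence relation $\sim$ is then read off by writing two Iwasawa representatives of the same right coset and comparing entries: the componentwise $\mathbb{Z}_p$-conditions encode $\mathbb{U}\cap G_{\mathbb{Z}_p}(2)$, while the cross-terms $t(u^\prime-u)$ and $(r+r^\prime)(u^\prime-u)$ arise from the non-abelian multiplication of the root subgroups.

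Claim (3) proceeds in parallel. For $\Lambda_1(1)$ the ambient group is $G_{\mathbb{Q}_p}(1)\simeq SL(2,\mathbb{Q}_p)$; part (1) implies exactly two orbits, represented by $L_0$ (self-dual) and $L_1$ (pure 1-almost). Each orbit is a homogeneous space $G_{\mathbb{Q}_p}(1)/K_i$, whose rank-1 Iwasawa decomposition is $\mathbb{Z}\ltimes(\mathbb{Q}_p/\mathbb{Z}_p)$, yielding the stated disjoint union. For the second correspondence (which I read as $\Lambda_1(2)$, presumably a typo), again there are exactly two orbits: the pure 1-almost orbit is parametrised as in (2) but with the modified $\sim$ — the replacement of certain $\mathbb{Z}_p$-integrality conditions by $\mathfrak{p}$-conditions reflects that $G(L_1)$ contains upper-right blocks valued in $\mathfrak{p}^{-1}M(n,\mathbb{Z}_p)$ rather than $M(n,\mathbb{Z}_p)$ — and the self-dual orbit reproduces the $\Lambda_0(2)$ summand.

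The main obstacle is verifying the precise form of the equivalence relations in (2) and (3): one has to multiply out two Iwasawa factorisations $n_1a_1k_1=n_2a_2k_2$ in $G_{\mathbb{Q}_p}(2)$, extract the commutator contributions from the unipotent product, and track which $p$-adic integrality is imposed by $K=G(L_0)$ versus $K=G(L_1)$. The remaining ingredients — Smith normal form, transitivity on self-dual lattices, and the Iwasawa decomposition — are standard in the $p$-adic setting and already appear earlier in the paper.
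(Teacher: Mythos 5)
Your strategy coincides with the paper's in all three parts, so the comparison is mostly about what you have actually verified. For (1) the paper does by hand what you do by citing Smith normal form: it factors $\Theta=A\,\mathrm{diag}(\alpha,\det\Theta/\alpha)\,B$ with $A,B\in SL(2,\mathbb{Z}_p)$, $\alpha\in\mathbb{Z}_p^\times$, and then compares $J$ on the transported bases, using that in dimension $2$ one has $J(\Theta u,\Theta v)=\det\Theta\,J(u,v)$ and that purity fixes the valuation of $J$ on any basis; your valuation argument is the same and part (1) is essentially complete (note only that you write $L'=\Theta L$ where the statement writes $L=\Theta L'$ --- your direction is the consistent one for $L'\subset L$ with $\Theta$ having $\mathbb{Z}_p$-entries and non-unit determinant).

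For (2) and (3), however, what you describe as ``the main obstacle'' --- multiplying out two Iwasawa factorisations and reading off which integrality conditions are imposed by the relevant maximal compact --- is not a peripheral verification: it is the entire content of the statement, since the claim \emph{is} the precise list of conditions $u'-u\in\mathbb{Z}_p$, $t'-t\in\mathbb{Z}_p$, $(r'-r)+t(u'-u)\in\mathbb{Z}_p$, $(s'-s)+(r+r')(u'-u)\in\mathbb{Z}_p$ (and the modified list with $\mathfrak{p}$-conditions in (3)). The paper's proof consists almost entirely of this computation: it writes $g$ in the Iwasawa form, computes $gL_0$ (resp.\ $gL_1$) explicitly, forms the product $n_1^{-1}u_1^{-1}(\text{torus ratio})u_2n_2$, and demands membership in $G_{\mathbb{Z}_p}(2)$ (resp.\ in $Sp((\mathbb{Q}_p^4,J_1);\mathbb{Z}_p)$ for the $1$-almost case, where $J_1$ is the rescaled form adapted to $L_1$), from which $a_i=a_i'$ and the four relations drop out entry by entry. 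Until you perform that multiplication you have not confirmed that the cross-terms come out as $t(u'-u)$ and $(r+r')(u'-u)$ rather than some other combination, nor that the $\mathfrak{p}$ versus $\mathbb{Z}_p$ placement in (3) is as stated. Likewise in (3) the paper does not merely assert the orbit structure: it computes $G(L_1,1)\simeq SL(2,\mathbb{Z}_p)$ explicitly to get the two copies of $\mathbb{Z}\ltimes(\mathbb{Q}_p/\mathbb{Z}_p)$, and writes out $G(L_1,2)$ and $G(L_2,2)$ as explicit matrix sets (the latter containing an extra Weyl-element coset), which is needed both to set up the Iwasawa decomposition relative to these non-conjugate maximal compacts and to account for all summands in the claimed decomposition. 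So: right plan, part (1) done, but the decisive computations for (2) and (3) are deferred rather than carried out.
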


\begin{proof}(1) $L^\prime$ is a sub-lattice of $L$ if and only if there exists $\Theta\in GL(2,\mathbb{Z}_p)$ such that $L=\Theta L^\prime$. For  $\Theta=\left(
                                \begin{array}{cc}
                                  a & b\\
                                  c & d \\
                                \end{array}
                              \right)$, we can always make one of $a,b,c,d$ to be a unit via extracting  a suitable factor $p^m (m\in\mathbb{Z}_+)$,  without lost
of generality, we may assume that $a$ or $c$ is a unit. Then we  have the following decomposition
\begin{align*}
 \left(
                                \begin{array}{cc}
                                  a & b\\
                                  c & d \\
                                \end{array}
                              \right)=\left\{
                                        \begin{array}{ll}
                                         \left(
                                           \begin{array}{cc}
                                             1 & \frac{a}{c} \\
                                             0 & 1 \\
                                           \end{array}
                                         \right)\left(
                                                  \begin{array}{cc}
                                                    0 & -1 \\
                                                    1 & 0 \\
                                                  \end{array}
                                                \right)\left(
                                                                \begin{array}{cc}
                                                                  c & 0 \\
                                                                  0& \frac{\det\Theta}{c} \\
                                                                \end{array}
                                                              \right)\left(
                                                         \begin{array}{cc}
                                                           1 & \frac{d}{c} \\
                                                           0 & 1\\
                                                         \end{array}
                                                       \right),
  & \hbox{$|c|_p=1$;} \\
                                          \left(
                                            \begin{array}{cc}
                                              1 & 0 \\
                                             \frac{c}{a} & 1 \\
                                            \end{array}
                                          \right)
\left(
                                            \begin{array}{cc}
                                              a &  0\\
                                              0 & \frac{\det\Theta}{a} \\
                                            \end{array}
                                          \right)
\left(
                                            \begin{array}{cc}
                                              1 & \frac{b}{a}\\
                                              0 & 1 \\
                                            \end{array}
                                          \right)
, & \hbox{$|a|_p=1$.}
                                        \end{array}
                                      \right.
\end{align*}
thus $\Theta$ can be rewritten as the form $\Theta=A\left(
                                                      \begin{array}{cc}
                                                        \alpha & 0 \\
                                                        0& \frac{\det\Theta}{\alpha} \\
                                                      \end{array}
                                                    \right)B
$ where $A,B\in SL(2,\mathbb{Z}_p)$ and $\alpha\in\mathbb{Z}_p^\times$. Let $\{e,f\}$ and $\{e^\prime,f^\prime\}$ be bases of $L$ and
$L^\prime$ respectively, then $\{e_1,f_1\}:=\{A^{-1}e,A^{-1}f\}$ and $\{e_2,f_2\}:=\{Be^\prime,Bf^\prime\}$ are still  bases, and $e_1=\alpha e_2$, $f_1=\frac{\det\Theta}{\alpha}f_2$.  Therefore $J(e_1,f_1)=p^l\mu=\det\Theta J(e_2,f_2)=p^k\nu$ where $\mu,\nu\in\mathbb{Z}_p^\times$ due to the purity of latices, i.e. $\det\Theta=p^{l-k}\mu\nu^{-1}$.

(2) Let $L_0, L_0^\prime$ be two self-dual  latices  in $(V,J)$,  there is a symplectic basis $\{e_1,e_2,f_1,f_2\}$ and an element $g\in G_{\mathbb{Q}_p}(2)$ such that
$L_0=\mathbb{Z}_pe_1\oplus\mathbb{Z}_pe_2\oplus\mathbb{Z}_pf_1\oplus\mathbb{Z}_pf_2$ and $L_0^\prime=gL_0$. By Iwasawa decomposition, $g$ can be expressed as
$$g=\begin{pmatrix}
p^{a_1}\,{} \\
   & p^{a_2}\,{} \\
     &&p^{-a_1}\,{} \\
   &&& p^{-a_2}\,{}
\end{pmatrix}\left(
               \begin{array}{cccc}
                 1 & u & 0 & 0 \\
                 0 & 1 & 0 & 0 \\
                 0 & 0 & 1 & 0 \\
                 0 & 0 & -u & 1 \\
               \end{array}
             \right)\left(
                      \begin{array}{cccc}
                        1 & 0 & s & r \\
                        0 & 1& r &t  \\
                        0& 0 & 1& 0 \\
                        0 & 0 & 0 & 1 \\
                      \end{array}
                    \right)
h,
$$
where $a_1, a_2\in\mathbb{Z}$, $u,r,s,t\in\mathbb{Q}_p$, and
$h$ belongs to the group $G_{\mathbb{Z}_p}(2)$ of $\mathbb{Z}_p$-points in $G(2)$ which preserves the lattice $L_0$. Therefore $L_0^\prime$ take the form as
\begin{align*}
 L_0^\prime=&\mathbb{Z}_p p^{a_1}e_1\oplus\mathbb{Z}_p (p^{a_1}ue_1+p^{a_2}e_2)\\
 &\oplus\mathbb{Z}_p(p^{a_1}se_1+p^{a_1}rue_1+p^{a_2}re_2 +p^{-a_1}f_1-p^{-a_2}uf_2)\\&\oplus\mathbb{Z}_p(p^{a_1}re_1+p^{a_1}ute_1+ p^{a_2}te_2+p^{-a_2}f_2).
\end{align*}
On the other hand, if $g^\prime\in  G_{\mathbb{Q}_p}(2)$ satisfies $g^\prime g^{-1}\in G_{\mathbb{Z}_p}(2)$, then $g^\prime L_0=gL_0=L_0^\prime$. More precisely, we need

\begin{align*}
&\left(
                      \begin{array}{cccc}
                        1 & 0 & -s & -r \\
                        0 & 1& -r &-t  \\
                        0& 0 & 1& 0 \\
                        0 & 0 & 0 & 1 \\
                      \end{array}
                    \right)\left(
               \begin{array}{cccc}
                 1 & -u & 0 & 0 \\
                 0 & 1 & 0 & 0 \\
                 0 & 0 & 1 & 0 \\
                 0 & 0 & u & 1 \\
               \end{array}
             \right)\begin{pmatrix}
p^{a^\prime_1-a_1}\,{} \\
  & p^{a^\prime_2-a_2}\,{} \\
     &&p^{a_1-a^\prime_1}\,{} \\
  &&& p^{a_2-a^\prime_2}\,{}
\end{pmatrix}\\
&\times\left(
               \begin{array}{cccc}
                 1 & u^\prime & 0 & 0 \\
                 0 & 1 & 0 & 0 \\
                 0 & 0 & 1 & 0 \\
                 0 & 0 & -u^\prime & 1 \\
               \end{array}
             \right)\left(
                      \begin{array}{cccc}
                        1 & 0 & s^\prime & r^\prime \\
                        0 & 1& r^\prime &t^\prime  \\
                        0& 0 & 1& 0 \\
                        0 & 0 & 0 & 1 \\
                      \end{array}
                    \right)
\\
=&\left(
   \begin{array}{cccc}
     p^{a_1^\prime-a_1} & u^\prime p^{a_1^\prime-a_1}-up^{a_2^\prime-a_2}   &\begin{array}{c}
                                                                             (-s-ru) p^{a_1-a_1^\prime}+ru^\prime p^{a_2-a_2^\prime} \\
                                                                              +(s^\prime+u^\prime r^\prime) p^{a^\prime_1-a_1}-r^\prime up^{a_2^\prime-a_2}
                                                                             \end{array}
  &\begin{array}{c}
    -rp^{a_2-a_2^\prime}+(r^\prime+u^\prime t^\prime)p^{a_1^\prime-a_1} \\
     -ut^\prime p^{a_2^\prime-a_2}
   \end{array}
 \\
     0 & p^{a_2^\prime-a_2}  & \begin{array}{c}
                              r^\prime p^{a_2^\prime-a_2}-(r+u t)p^{a_1-a_1^\prime}\\
                                 +u^\prime t p^{a_2-a_2^\prime}
                               \end{array}
  & t^\prime p^{a_2^\prime-a_2}-t p^{a_2-a_2^\prime} \\
     0 & 0 & p^{a_1-a_1^\prime} & 0 \\
     0 & 0 &  up^{a_1-a_1^\prime}-u^\prime p^{a_2-a_2^\prime}  & p^{a_2-a_2^\prime}  \\
   \end{array}
 \right)\\
\in& G_{\mathbb{Z}_p}(2),
\end{align*}
which implies  $a_1=a^\prime_1,a_2=a^\prime_2$ and the equivalent relation.

(3) Consider the case $n=1$, the  1-almost self-dual lattice $L_1$ is given by $L_1=\mathbb{Z}_pe_1\oplus p^{-1}\mathbb{Z}_pf_1$, and  the corresponding maximal compact subgroup
$G(L_1,1)$ is given by \begin{align*}
                        G(L_1,1)&=\left\{\left(
                                     \begin{array}{cc}
                                       a & pb\\
                                       p^{-1}c & d \\
                                     \end{array}
                                   \right)
                        :a,b,c,d\in\mathbb{Z}_p,ad-bc=1\right\}\\
                        &=\left(
                           \begin{array}{cc}
                             1 & 0 \\
                            \mathfrak{p}^{-1} & 1 \\
                           \end{array}
                         \right)\left(
                                  \begin{array}{cc}
                                    \alpha & 0 \\
                                    0& \alpha^{-1} \\
                                  \end{array}
                                \right)_{\alpha\in \mathbb{Z}_p^\times}
\left(
                                  \begin{array}{cc}
                                    1 & \mathfrak{p} \\
                                    0 & 1 \\
                                  \end{array}
                                \right)\\
&\simeq SL(2,\mathbb{Z}_p),
                       \end{align*} which implies $\Lambda_1(1)\simeq\Lambda_0(1)\bigsqcup\Lambda_0(1)$, where $\Lambda_0(1)\simeq\mathbb{Z}\ltimes (\mathbb{Q}_p/\mathbb{Z}_p)$\cite{z}.

 For the case $n=2$, we have $L_1=\mathbb{Z}_pe_1\oplus\mathbb{Z}_pe_2\oplus \mathfrak{p}^{-1}f_1\oplus\mathbb{Z}_pf_2$, and then
\begin{align*}
 G(L_1,2)=&\left(
            \begin{array}{cccc}
              \mathbb{Z}_p& \mathbb{Z}_p & \mathfrak{p} &\mathbb{Z}_p \\
              \mathbb{Z}_p & \mathbb{Z}_p &\mathfrak{p}  &\mathbb{Z}_p \\
             \mathfrak{p}^{-1}  & \mathfrak{p}^{-1} & \mathbb{Z}_p & \mathfrak{p}^{-1} \\
              \mathbb{Z}_p & \mathbb{Z}_p&\mathfrak{p} & \mathbb{Z}_p \\
            \end{array}
          \right)\bigcap G_{\mathbb{Q}_p}(2)\\
          =&\left\{\left(
                     \begin{array}{cccc}
                       1 & 0 & 0 & 0\\
                       a & 1 & 0& 0 \\
                      \mathfrak{p}^{-1}& \mathbb{Z}_p & 1 & -a \\
                       \mathbb{Z}_p & \mathbb{Z}_p & 0 & 1 \\
                     \end{array}
                   \right)
         \begin{pmatrix}
\alpha\,{} \\
   & \beta\,{} \\
     &&\alpha^{-1}\,{} \\
   &&& \beta^{-1}\,{}
\end{pmatrix}\left(
                     \begin{array}{cccc}
                       1 & pb & \mathfrak{p} &  \mathfrak{p } \\
                       0& 1 &  \mathfrak{p }&\mathbb{Z}_p \\
                       0 & 0 & 1 & 0 \\
                       0 & 0 &  -pb & 1 \\
                     \end{array}
                   \right):
          a,b\in\mathbb{Z}_p; \alpha,\beta\in\mathbb{Z}_p^\times\right\}.
\end{align*}
Thereby the  corresponding Iwasawa decomposition is given by
$$g=\begin{pmatrix}
p^{a_1}\,{} \\
   & p^{a_2}\,{} \\
     &&p^{-a_1}\,{} \\
   &&& p^{-a_2}\,{}
\end{pmatrix}\left(
               \begin{array}{cccc}
                 1 & 0 & 0 & 0 \\
                 u & p & 0 & 0 \\
                 0 & 0 & 1 & -u \\
                 0 & 0 & 0 & p^{-1} \\
               \end{array}
             \right)\left(
                      \begin{array}{cccc}
                        1 & 0 & 0 & 0 \\
                        0 & 1& 0 &0  \\
                        s& pr & 1& 0 \\
                        r & t & 0 & 1 \\
                      \end{array}
                    \right) h$$
                    for $h\in G(L_1,2)$.
$L_1$ induces a lattice $L_1^\prime$ via a symplectic transformation,  which takes the form as
\begin{align*}
 L_1^\prime=&\mathbb{Z}_p( p^{a_1}e_1+p^{a_2}ue_2+p^{-a_1-1}(s-ur)f_1+p^{-a_2-1}rf_2)\\
 &\oplus\mathbb{Z}_p(p^{a_2+1}e_2+p^{-a_1}rf_1-p^{-a_1-1}utf_1+p^{-a_2-1}tf_2)\\
&\oplus\mathbb{Z}_pp^{-a_1-1}f_1\oplus\mathbb{Z}_p(-p^{-a_1-1}uf_1+p^{-a_2-1}f_2).
\end{align*}
From the same argument in (2), it follows that $g^\prime L_1=gL_1$ if and only if
\begin{align*}
 &\left(
   \begin{array}{cccc}
     p^{a_1^\prime-a_1} &
0  &0
  &0
 \\
     u^\prime p^{a_2^\prime-a_2-1}-up^{a_1^\prime-a_1-1}  & p^{a_2^\prime-a_2}  & 0
  & 0 \\
    \begin{array}{c}
                                                                             (ur-s) p^{a_1^\prime-a_1}-ru^\prime p^{a_2^\prime-a_2} \\
                                                                              +(s^\prime-u^\prime r^\prime) p^{a_1-a^\prime_1}+r^\prime up^{a_2-a_2^\prime}
                                                                             \end{array}& \begin{array}{c}
    -rp^{a_2^\prime-a_2+1}+(r^\prime p-u^\prime t^\prime)p^{a_1-a_1^\prime} \\
     +ut^\prime p^{a_2-a_2^\prime}
   \end{array} & p^{a_1-a_1^\prime} & up^{a_2-a_2^\prime}-u^\prime p^{a_1-a_1^\prime} \\
   \begin{array}{c}
                              r^\prime p^{a_2-a_2^\prime}+(ut-rp)p^{a_1^\prime-a_1-1}\\
                                 -u^\prime t p^{a_2^\prime-a_2-1}
                               \end{array} &t^\prime p^{a_2-a_2^\prime}-t p^{a_2^\prime-a_2} &  0  & p^{a_2-a_2^\prime}  \\
   \end{array}
 \right)\\
\in&Sp((\mathbb{Q}_p^4,J_1);\mathbb{Z}_p),
\end{align*}
where $J_1=\left(
             \begin{array}{cccc}
               0 & 0 & p^{-1} & 0 \\
               0 & 0 & 0& 1 \\
               -p^{-1} & 0 & 0 & 0 \\
               0 & -1 & 0 & 0 \\
             \end{array}
           \right)
$.

Similarly, for $L_2=\mathbb{Z}_pe_1\oplus\mathbb{Z}_pe_2\oplus\mathfrak{p}^{-1}f_1\oplus\mathfrak{p}^{-1}f_2$  we have
\begin{align*}
     &G(L_2,2)= \left(
            \begin{array}{cccc}
              \mathbb{Z}_p& \mathbb{Z}_p & \mathfrak{p} &\mathfrak{p} \\
              \mathbb{Z}_p & \mathbb{Z}_p &\mathfrak{p}  &\mathfrak{p} \\
              \mathfrak{p}^{-1} &\mathfrak{p}^{-1} & \mathbb{Z}_p &  \mathbb{Z}_p \\
            \mathfrak{p}^{-1}&\mathfrak{p}^{-1}&\mathbb{Z}_p & \mathbb{Z}_p \\
            \end{array}
          \right)\bigcap G_{\mathbb{Q}_p}(2)\\
           =&\left\{\left(
                     \begin{array}{cccc}
                       1 & 0 & 0 & 0\\
                       a & 1 & 0& 0 \\
                       \mathfrak{p}^{-1} & \mathfrak{p}^{-1} & 1 & -a \\
                    \mathfrak{p}^{-1} & \mathfrak{p}^{-1} & 0 & 1 \\
                     \end{array}
                   \right)
          \begin{pmatrix}
\alpha\,{} \\
   & \beta\,{} \\
     &&\alpha^{-1}\,{} \\
   &&& \beta^{-1}\,{}
\end{pmatrix}\left(
                     \begin{array}{cccc}
                       1 &  b& \mathfrak{p} &  \mathfrak{p } \\
                       0& 1 &  \mathfrak{p }&\mathfrak{p} \\
                       0 & 0 & 1 & 0 \\
                       0 & 0 &  -b & 1 \\
                     \end{array}
                   \right):a,b\in\mathbb{Z}_p; \alpha,\beta\in\mathbb{Z}_p^\times \right\}\\
                   &\bigcup\left\{\left(
                     \begin{array}{cccc}
                       1 & 0 & 0 & 0\\
                       a & 1 & 0& 0 \\
                      \mathfrak{p}^{-1}& \mathfrak{p}^{-1} & 1 & -a \\
                     \mathfrak{p}^{-1}& \mathfrak{p}^{-1} & 0 & 1 \\
                     \end{array}
                   \right)
         \begin{pmatrix}
\alpha\,{} \\
   & \beta\,{} \\
     &&\alpha^{-1}\,{} \\
   &&& \beta^{-1}\,{}
\end{pmatrix}\left(
                                       \begin{array}{cccc}
                                         0 & 1 & 0 & 0 \\
                                         1 & 0 & 0 & 0 \\
                                         0 & 0 & 0 & 1 \\
                                         0 & 0& 1 & 0 \\
                                       \end{array}
                                     \right)
          \left(
                     \begin{array}{cccc}
                       1 &  b& \mathfrak{p} &  \mathfrak{p } \\
                       0& 1 &  \mathfrak{p }&\mathfrak{p} \\
                       0 & 0 & 1 & 0 \\
                       0 & 0 &  -b & 1 \\
                     \end{array}
                   \right)
\right.\\& \ \ \ \ \ \left.:a,b\in\mathbb{Z}_p; \alpha,\beta\in\mathbb{Z}_p^\times \right\}.
\end{align*}
From these explicit expressions we can deduce the conclusions.
\end{proof}

\section{$p$-adic Heisenberg Group}
The standard quantum mechanics starts with a representation of the well-known
Heisenberg commutation relation $[\hat q,\hat p]\subset iId$ where $\hat q,\hat p$ are unbounded self-adjoint linear operator on a Hilbert space $\mathcal{H}$
and the domain of $[\hat q,\hat p]$ is a dense subset in $\mathcal{H}$ and $[\hat q,\hat p]|_{\textrm{Dom}([\hat q,\hat p])}=iId$. Let $\{E_\lambda,-\infty<\lambda<\infty\}$ be the spectrum of $\hat p$, i.e. $p=\int_{-\infty}^\infty\lambda dE_\lambda$. Define a family of unitary operators $U(t)=e^{i\hat pt}
:=\int_{-\infty}^{\infty}e^{it\lambda }dE_\lambda, t\in\mathbb{R}$. Similarly one may define $V(t)=e^{i\hat q t}$. Then the
Heisenberg commutation relation is equivalent to  the relation
\begin{equation}\label{eq:a}
  U(t)V(s)=e^{its}V(s)U(t).
\end{equation}
Another approach is to define a unitary operator $Q(z)=U(t)V(s)e^{-\frac{i}{2}ts}$ where $z=(t,s)\in\mathbb{R}^2$. Then \eqref{eq:a} is equivalent to the relation
\begin{equation}\label{eq:b}
  Q(z)Q(z^\prime)=e^{\frac{i}{2}J_0(z,z^\prime)}Q(z+z^\prime),
\end{equation}
where $J_0(z,z^\prime)=ts^\prime-t^\prime s$.  \eqref{eq:a} or  \eqref{eq:b} is called Weyl formulation of quantum mechanics where one represents Heisenberg group rather than its algebra. It provides an appropriate framework for a generalization to $p$-adic variables.
\begin{definition}
The
$p$-adic Heisenberg group $H_p(V)$ over  a $2n$-dimensional symplectic  vector space $(V,J)$ is the group extension of $V$ by a unit circle $\mathbf{S}^1$ in the field $\mathbb{C}$, i.e. the set of  pairs
$\{(u;\alpha):u\in V,\alpha\in\mathbf{S}^1\}$ with the composition law
\begin{align}\label{c}
 (u;\alpha)\cdot (v;\beta)=(u+v;\alpha\beta\chi(\frac{1}{2}J (u,v))).
\end{align}
\end{definition}
\subsection{Schr\"{o}dinger representations and  Weyl operators}

Under the Schr\"{o}dinger picture, the Heisenberg group $H(V)$  can be realized on the group $\mathcal{U}$ of unitary operators  acting on the space $\mathcal{D}(\mathbb{Q}_p^n)$ as follows. Let us choose a symplectic basis such that $(V,J)\simeq (\mathbb{Q}_p^{2n},J_0)$, and equip $\mathcal{U}$ with a weak operator topology,  then we define a continuous homomorphism called the Schr\"{o}dinger representation  $\Phi:H(V)\rightarrow\mathcal{U}$ via $\Phi(g)[\psi](\xi)=\alpha T_z[\psi](\xi)$, where
\begin{align}\label{1}
  T_z[\psi](\xi)=\psi(\xi+x)\chi(\sum_i(y_i\xi_i+\frac{1}{2}x_iy_i))=\psi(\xi+x)\prod_i\chi(y_i\xi_i+\frac{1}{2}x_iy_i)
\end{align}
for $g=(z=(x_1,\cdots,x_n,y_1,\cdots y_n);\alpha)\in H(V)$ in terms of the chosen symplectic basis of $V$ and $\xi=(\xi_1,\cdots, \xi_n)\in \mathbb{Q}_p^n$.
\begin{proposition}
The  Schr\"{o}dinger representation  is an irreducible and admissible representation.
\end{proposition}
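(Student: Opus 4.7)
The plan is to extend $\Phi$ from $\mathcal{D}(\mathbb{Q}_p^n)$ to a unitary representation on $L^2(\mathbb{Q}_p^n)$, which is legitimate because $\mathcal{D}(\mathbb{Q}_p^n)$ is dense in $L^2$ and each $\Phi(g)$ preserves the inner product by construction of \eqref{1}. Then I treat the two properties separately: irreducibility via Schur's lemma, and admissibility by explicit computation of $K$-fixed vectors for a cofinal family of compact open subgroups $K$.

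For irreducibility, I let $A$ be a bounded operator on $L^2(\mathbb{Q}_p^n)$ commuting with every $\Phi(g)$, and show $A$ is a scalar. Specializing \eqref{1}, the subgroup $z=(x,0)$ acts by translation $T_{(x,0)}\psi(\xi)=\psi(\xi+x)$, and $z=(0,y)$ by modulation $T_{(0,y)}\psi(\xi)=\chi(y\cdot\xi)\psi(\xi)$. Commutation with all translations identifies $A$ as a Fourier multiplier via the $p$-adic Parseval-Steklov formula of Section~2: $A=F^{-1}M_m F$ for some $m\in L^\infty(\mathbb{Q}_p^n)$. Conjugation by $F$ also turns each modulation $M_y$ into the translation $T_{-y}$, so commutation of $A$ with every $M_y$ becomes commutation of $M_m$ with every translation, forcing $m$ to be constant. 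Hence $A$ is a scalar, and $\Phi$ is irreducible by Schur's lemma.

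For admissibility, I would consider, for $m$ sufficiently large, the cofinal family
\begin{equation*}
K_m=\{((x,y);1):x,y\in(p^m\mathbb{Z}_p)^n\}\subset H(V)
\end{equation*}
of compact open subgroups, chosen large enough that the cocycle $\chi(\tfrac12 J_0)$ is trivial on $K_m\times K_m$ so that $K_m$ is a genuine subgroup. A vector $\psi\in L^2(\mathbb{Q}_p^n)$ is $\Phi(K_m)$-invariant if and only if
\begin{equation*}
\psi(\xi+x)\chi(y\cdot\xi)=\psi(\xi)
\quad\text{for all }x,y\in(p^m\mathbb{Z}_p)^n.
\end{equation*}
Setting $x=0$ and using that $\chi$ is trivial precisely on $\mathbb{Z}_p$ forces $\mathrm{supp}(\psi)\subseteq(p^{-m}\mathbb{Z}_p)^n$; setting $y=0$ forces $\psi$ to be constant on cosets of $(p^m\mathbb{Z}_p)^n$. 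Hence the invariant subspace embeds into the space of functions on the finite set $(p^{-m}\mathbb{Z}_p/p^m\mathbb{Z}_p)^n$, of dimension $p^{2mn}$, and admissibility follows.

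The main obstacle is the Fourier-multiplier step of the irreducibility argument: one needs the $p$-adic analogue of the classical fact that the commutant of all translations in $\mathcal{B}(L^2(\mathbb{Q}_p^n))$ consists precisely of the Fourier multipliers. This is a standard consequence of the Parseval-Steklov unitarity of $F$ together with the simple joint spectrum of the translation group; once it is granted, the remaining steps are essentially finite Fourier calculations on $(\mathbb{Z}_p/p^m\mathbb{Z}_p)^n$ and bookkeeping.
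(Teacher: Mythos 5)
Your proof is correct in substance but follows a genuinely different route from the paper on both halves of the statement. For irreducibility, the paper works directly with an arbitrary nonzero invariant subspace $\mathbb{W}\subset\mathcal{D}(\mathbb{Q}_p^n)$: it uses the modulations $T_{(0,y)}$ together with injectivity of the Fourier transform to show $\mathbb{W}=\mathcal{D}(E)$ for $E$ the union of supports, and then uses ergodicity of the translation action on $(\mathbb{Q}_p^n, d^nx)$ to force $E=\mathbb{Q}_p^n$ up to measure zero; you instead pass to the unitary completion $L^2(\mathbb{Q}_p^n)$ and run the classical von Neumann commutant argument (commuting with translations $\Rightarrow$ Fourier multiplier; commuting with modulations $\Rightarrow$ constant symbol). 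Your commutant step is sound --- under $F$ the translations generate the maximal abelian algebra $L^\infty(\mathbb{Q}_p^n)$, so its commutant is itself --- but note that Schur's lemma on $L^2$ only gives \emph{topological} irreducibility of the completion, whereas the proposition concerns the smooth representation on $\mathcal{D}(\mathbb{Q}_p^n)$; to descend you should add one sentence observing that for a dense invariant subspace $W\subseteq\mathcal{D}(\mathbb{Q}_p^n)$ the averaging idempotents $e_{K_m}$ map $W$ onto the finite-dimensional space $(L^2)^{K_m}$ (finite-dimensionality being exactly the admissibility you prove), so $W$ contains every smooth vector. For admissibility, the paper takes a less elementary path: it invokes the theorem that an irreducible smooth supercuspidal representation is admissible and verifies supercuspidality by a matrix-coefficient identity, whereas you compute the $K_m$-fixed vectors directly and bound their dimension by $p^{2mn}$; your computation is correct (support in $(p^{-m}\mathbb{Z}_p)^n$ from the modulations, local constancy from the translations, with $m$ large enough to kill the cocycle when $p=2$) and is arguably cleaner and self-contained, at the cost of being tied to the explicit model rather than a structural criterion.
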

\begin{proof}
Suppose $\mathbb{W}$ is a non-tivial $H(V)$-invariant   subspace of $\mathcal{D}(\mathbb{Q}_p^n)$. One can choose an orthonormal basis $\{\psi_\alpha\}$ for $\mathbb{W}$ such that each element belongs to $\mathbb{W}$ can be written as the linear combination  of finite $\psi_\alpha$s. Let $E_\alpha$ be the support of $\psi_\alpha$, $E=\bigcup_\alpha E_\alpha$ and $\mathcal{D}(E)$ be the space of functions  vanishing on $\mathbb{Q}_p^n\backslash E$ in $\mathcal{D}(\mathbb{Q}_p^n)$. It follows  from the definition \eqref{1} that  $f_\alpha(\xi)=T_{(0,y)}[\psi](\xi)=\chi(\sum_iy_i\xi_i)\psi_\alpha(\xi)\in \mathbb{W}$ for any $y=(y_1,\cdots,y_n)\in\mathbb{Q}_p^n$. Obviously, $\mathbb{W}\subset \mathcal{D}(E)$, moreover we claim that $\mathbb{W}=\mathcal{D}(E)$. Otherwise there is a non-zero function $g\in\mathcal{D}(E) $ such that $(f_\alpha,g)=\int_{\mathbb{Q}_p^n}f_\alpha(\xi)\overline{g(\xi)}d^n\xi=F[\psi_\alpha\overline{g}](y)=0$ for $\forall\alpha$, which implies that $\psi_\alpha(\xi)\overline{g(\xi)}$ vanishes  on $\mathbb{Q}_p^n$ . Hence $g$ has to be zero  on $E$, but $g\equiv0$ on $\mathbb{Q}_p^n\backslash E$, so $g=0$, which exhibits  a contradiction. So far, we only need to show $E=\mathbb{Q}_p^n$ up to a zero measure set. Indeed, if  $\psi(\xi)\in \mathcal{D}(E)$, then $\psi(\xi+x)\in \mathcal{D}(E) $ for any $x=(x_1,\cdots,x_n)\in\mathbb{Q}_p^n$, namely $\psi(\xi+x)\in \mathcal{D}(E) $ vanishes outside $E$. Therefore $E$ is quasi-invariant with respect to the translation. It is known that $\mathbb{Q}_p^n$ as the Harr measure space is ergodic with respect to  the  translation group, then $\mathbb{Q}_p^n\backslash E$ is a zero measure set since $E$ cannot be a zero measure set.

Therefore the  Schr\"{o}dinger representation  is an irreducible smooth representation, hence to show it is admissible we only need to show it is supercuspidal\cite{p}. Indeed, since Schr\"{o}dinger representation $(\Phi,\mathcal{D}(\mathbb{Q}_p^n))$ and the induced representation $(\Phi^*,((\mathcal{D}(\mathbb{Q}_p^n))^*)$ are both  smooth, if one  fixes $\psi\in \mathcal{D}(\mathbb{Q}_p^n), \Psi\in (\mathcal{D}(\mathbb{Q}_p^n))^*$, then  there exist  a compact open subgroup $K$ of $H(V)$ such that $\psi\in (\mathcal{D}(\mathbb{Q}_p^n))^K, \Psi\in ((\mathcal{D}(\mathbb{Q}_p^n))^*)^K$ and $\Psi(\Phi(k_1gk_2)\psi)=\Psi(\Phi(g)\psi)$ for any $k_1, k_2\in K$, and $g\in H(V)$, where $(\mathcal{D}(\mathbb{Q}_p^n))^K=\{\psi\in\mathcal{D}(\mathbb{Q}_p^n):\Phi(g)\psi=\psi \textrm{ for any } g\in K\}$. Take  $g=((x,y);1)$ and $k_1( \textrm{ or }k_2 )=((x^\prime,y^\prime);1)$ with non-zero $x^\prime, y^\prime$, then $\Psi(\Phi(((x,y);1))\psi)=\chi(\frac{1}{2}(x^\prime \cdot y-x\cdot y^\prime))\Psi(\Phi(((x+x^\prime,y+y^\prime);1))\psi)=\chi(x^\prime \cdot y+\frac{1}{2}x^\prime\cdot y^\prime)\Psi(\Phi(((x,y);1))\psi)=\chi(y^\prime \cdot x+\frac{1}{2}x^\prime\cdot y^\prime)\Psi(\Phi(((x,y);1))\psi)$ which shows the supercuspidality.
\end{proof}

\begin{theorem}(Non-Archimedean Stone-von Neumann Theorem\cite{mv}) Any smooth representation $\Phi$ of  the Heisenberg group $H(V)$ satisfying $\Phi(0,\alpha)=\alpha Id$ decomposes into the direct sum of irreducible representations equivalent to the  Schr\"{o}dinger representations.
\end{theorem}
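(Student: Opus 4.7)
The plan is to imitate the classical Stone--von Neumann argument in the smooth $p$-adic setting. The hypothesis $\Phi(0,\alpha)=\alpha\,\text{Id}$ asserts that the central circle acts by the tautological character, so the theorem reduces to classifying smooth representations of $H(V)$ with this fixed central character; I will show that every such representation is a direct sum of copies of the Schr\"odinger representation $\mathcal{S}$, which was shown irreducible and admissible in the previous proposition.

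First, realize $\mathcal{S}$ as an induced representation. Fix a Lagrangian $L\subset V$, say $L=\text{Span}_{\mathbb{Q}_p}\{f_1,\dots,f_n\}$, and form the subgroup $H_L=\{(u,\alpha)\in H(V):u\in L\}$. By \eqref{c} and the vanishing of $J$ on $L\times L$, $H_L$ is abelian, and $\psi_L(u,\alpha):=\alpha$ is a character of $H_L$ extending the central character. Using the complementary Lagrangian $L'=\text{Span}_{\mathbb{Q}_p}\{e_1,\dots,e_n\}$ as a transversal for $H(V)/H_L$, one identifies $\mathcal{S}\simeq\text{c-Ind}_{H_L}^{H(V)}\psi_L$: the operators $T_z$ of \eqref{1} are precisely left translation on $\psi_L$-equivariant compactly supported sections. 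Given any smooth $(\Phi,\mathcal{H})$ with the prescribed central character, Frobenius reciprocity produces a natural isomorphism
\begin{equation*}
\Hom_{H(V)}(\mathcal{S},\mathcal{H})\;\simeq\;\Hom_{H_L}(\psi_L,\mathcal{H}|_{H_L})\;=\;\mathcal{H}^{\psi_L},
\end{equation*}
where $\mathcal{H}^{\psi_L}=\{v\in\mathcal{H}:\Phi(u,1)v=v\text{ for all }u\in L\}$.

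The crux is to show $\mathcal{H}^{\psi_L}\neq 0$ whenever $\mathcal{H}\neq 0$. Given a non-zero $v\in\mathcal{H}$, smoothness provides a compact open subgroup $K\subset L$ fixing $v$, so the restriction of $\Phi|_L$ to the cyclic span of $v$ factors through the discrete quotient $L/K$. A Mackey-type argument then produces a $\psi_L$-equivariant vector: the commutation relation $\Phi(v',1)\Phi(u,1)\Phi(v',1)^{-1}=\chi(J(v',u))\Phi(u,1)$ for $u\in L$ and $v'\in L'$, combined with the non-degeneracy of $J|_{L\times L'}$, shows that conjugation by $\Phi(L')$ shifts the character spectrum of $\Phi|_L$ by an arbitrary character of $L/K$; suitably averaging translates $\Phi(v',1)v$ against the Pontryagin dual of $L/K$ then isolates the trivial-character component and yields a non-zero vector in $\mathcal{H}^{\psi_L}$. (An equivalent cleaner route is to observe that the twisted convolution algebra $\mathcal{H}(V)$ of compactly supported locally constant functions on $V$ acts non-degenerately on $\mathcal{H}$ and contains a rank-one idempotent projecting onto $\mathcal{H}^{\psi_L}$.)

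Any non-zero element of $\mathcal{H}^{\psi_L}$ then yields by Frobenius a non-trivial $H(V)$-morphism $\mathcal{S}\to\mathcal{H}$, which must be injective by the irreducibility of $\mathcal{S}$. Splitting off its image using the projection onto the $\psi_L$-isotypic component and iterating via Zorn's lemma on the poset of mutually orthogonal families of $\mathcal{S}$-subrepresentations gives $\mathcal{H}\simeq\bigoplus\mathcal{S}$. The main obstacle is the non-vanishing in the previous step: in the Archimedean case one appeals to the spectral theorem for the unitary group $\Phi(L)$, which is unavailable in the smooth $p$-adic category, and the character-averaging substitute requires care to ensure that the final vector is genuinely non-zero rather than collapsing against a pathological decomposition of $\Phi|_L$.
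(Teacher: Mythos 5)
The first thing to say is that the paper itself offers no proof of this theorem: it is stated as a quoted result with a citation to Moeglin--Vign\'eras--Waldspurger, so there is no internal argument to measure you against, and your sketch has to be judged on its own merits against the standard proof in that reference. Your overall strategy --- realize the Schr\"odinger representation as induced from a maximal abelian subgroup, use an adjunction to convert $\Hom_{H(V)}(\mathcal{S},\mathcal{H})$ into an eigenspace, prove that eigenspace is nonzero, and then split off copies of $\mathcal{S}$ --- is the right one, but as written two of the steps are not actually justified, and both failures trace back to the same choice: you induce from the Lagrangian subgroup $H_L=(L,\mathbf{S}^1)$, which is closed but \emph{not open} in $H(V)$. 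First, the adjunction you invoke, $\Hom_{H(V)}(\tn{c-Ind}_{H_L}^{H(V)}\psi_L,\mathcal{H})\simeq\Hom_{H_L}(\psi_L,\mathcal{H}|_{H_L})$, is Frobenius reciprocity for \emph{compact} induction, which is a formal left-adjointness statement only when the subgroup is open; for a non-open closed subgroup the formal adjunction goes the other way (smooth $\tn{Ind}$ is right adjoint to restriction), so the displayed isomorphism needs a separate argument that you do not supply. Second, the non-vanishing of $\mathcal{H}^{\psi_L}$ is, as you yourself say, the crux, and the averaging over the Pontryagin dual of $L/K$ silently assumes that $\Phi|_L$ on the cyclic span of $v$ decomposes into characters of the discrete group $L/K$. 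That is not automatic for a representation of a non-compact abelian group; it is true here only because $L/K\cong(\mathbb{Q}_p/\mathbb{Z}_p)^n$ is a torsion, locally finite group, so every finitely generated subgroup is finite and acts semisimply --- but that is precisely the missing idea, and without it the step you flag as ``requiring care'' is a gap rather than a technicality.

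Both problems evaporate if you take the route you mention only in parentheses, which is the one actually used in the cited reference: induce instead from the subgroup $(A,\mathbf{S}^1)$ attached to a \emph{self-dual lattice} $A\subset V$ (these appear in Section 3.3 of the paper and again in Section 4.2 as the groups $\Gamma(L)$). That subgroup is open, so compact induction really is left adjoint to restriction; moreover the normalized idempotent $e_A$ in the twisted convolution algebra of $H(V)$ satisfies $e_A\ast\mathcal{H}(V)\ast e_A=\mathbb{C}e_A$ (a direct computation using self-duality of $A$ and the integral $\int_{|x|_p\le 1}\chi(\xi x)\,dx$ from Section 2), which simultaneously gives irreducibility of the induced model, the non-vanishing $e_A\mathcal{H}\neq 0$ for $\mathcal{H}\neq 0$, and the multiplicity space $\Hom_{H(V)}(\mathcal{S},\mathcal{H})\simeq e_A\mathcal{H}$. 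One further small point: in the last paragraph you speak of ``mutually orthogonal families'' of subrepresentations, but an abstract smooth representation carries no inner product; the splitting should instead be produced from the isotypic projection $e_A$ (or, in your setup, the projection onto $\mathcal{H}^{\psi_L}$) together with Zorn's lemma. With the lattice model substituted for the Lagrangian model, your outline becomes the standard and correct proof.
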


Let $A:\mathcal{D}(\mathbb{Q}_p^n)\rightarrow\mathcal{D}^*(\mathbb{Q}_p^n)$ be a linear operator, then we define a bilinear functional $L_A$ on $\mathcal{D}(\mathbb{Q}_p^n)$ associated to $A$ as follows:
\begin{align*}
  L_A(\psi,\varphi)=\int\alpha^{\frac{1}{2}} \chi_{\frac{1}{8}}(\sum_i(x_i^2-y_i^2))\langle\Phi^*(g)\circ A(\psi),\varphi\rangle d\mu_g
\end{align*}
for $\psi,\varphi\in \mathcal{D}(\mathbb{Q}_p^n)$, where the linear operator $\Phi(g)^*:\mathcal{D}^*(\mathbb{Q}_p^n)\rightarrow\mathcal{D}^*(\mathbb{Q}_p^n)$ is induced by $\Phi(g)$, $\langle,\rangle$ stands for the pairing between $\mathcal{D}^*(\mathbb{Q}_p^n)$ and $\mathcal{D}(\mathbb{Q}_p^n)$, and
$d\mu_g$ denotes the invariant measure on $H(V)$.
Hence there exists a a linear operator $P_A:\mathcal{D}(\mathbb{Q}_p^n)\rightarrow\mathcal{D}^*(\mathbb{Q}_p^n)$ such that $L_A(\psi,\varphi)=\langle P_A\psi,\varphi\rangle$\cite{ak}.
\begin{proposition}$L_{P_{A}}=CL_{A}$, where $C$ is a constant.
\end{proposition}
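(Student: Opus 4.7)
The plan is to compute $L_{P_{A}}(\psi,\varphi)$ directly from the two defining formulas and reduce the resulting double integral over $H(V)\times H(V)$ to a scalar multiple of $L_{A}(\psi,\varphi)$ via a Schur-type orthogonality identity for the Schr\"odinger representation.

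First I would unfold both definitions in succession. Writing $g=(z;\alpha)$ with $z=(x,y)\in V$ and similarly $g'=(z';\alpha')$, the duality relation $\langle \Phi^*(g)T,\varphi\rangle=\langle T,\Phi(g)^{-1}\varphi\rangle$ together with the defining identity $\langle P_{A}\psi,\cdot\rangle=L_{A}(\psi,\cdot)$ gives
\begin{align*}
L_{P_{A}}(\psi,\varphi)
&=\int \alpha^{1/2}\chi_{1/8}\!\Bigl(\textstyle\sum_{i}(x_{i}^{2}-y_{i}^{2})\Bigr)\langle P_{A}\psi,\Phi(g)^{-1}\varphi\rangle\,d\mu_{g}\\
&=\iint \alpha^{1/2}\alpha'^{1/2}\,\chi_{1/8}(\cdots)\chi_{1/8}(\cdots')\,\langle A\psi,\Phi((g'g)^{-1})\varphi\rangle\,d\mu_{g'}\,d\mu_{g},
\end{align*}
where the group law $\Phi(g)\Phi(g')=\Phi(gg')$ encoded in \eqref{c} has been used to merge the two dual actions.

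Next I would change the inner variable via $h=g'g$, which decouples $g$ from the pairing $\langle A\psi,\Phi(h^{-1})\varphi\rangle$ and, by \eqref{c}, turns the Heisenberg 2-cocycle into an additional quadratic phase in the $V$-coordinates of $g$. Combined with the twist $\chi_{1/8}\!\bigl(\sum(x_{i}^{2}-y_{i}^{2})\bigr)$, the $dg$-integrand is then purely Gaussian in the variables $(x_{i},y_{i})$, and the integration is carried out componentwise by the $p$-adic Gaussian formula
\[
\int_{\mathbb{Q}_{p}}\chi(ax^{2}+bx)\,dx=\frac{\lambda_{p}(a)}{\sqrt{|a|_{p}}}\chi\!\Bigl(-\frac{b^{2}}{4a}\Bigr)
\]
recalled in Section~2. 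The point of the weight $\chi_{1/8}$ is that the $\lambda_{p}$-factors and the resulting $-b^{2}/(4a)$-phases reconstitute exactly the factor $\alpha'^{1/2}\chi_{1/8}(\cdots')$ in the remaining variable $h$, so what is left over is $C\cdot L_{A}(\psi,\varphi)$ with an explicit $C$.

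The main obstacle is that the identity underlying this reduction, namely
\[
\int_{H(V)}\alpha^{1/2}\chi_{1/8}\!\Bigl(\textstyle\sum_{i}(x_{i}^{2}-y_{i}^{2})\Bigr)\,\Phi^{*}(g)\,d\mu_{g}=C\cdot\mathrm{Id},
\]
is a Schur orthogonality statement for an infinite-dimensional, non--trace-class representation, so it must be interpreted distributionally on $\mathcal{D}^{*}(\mathbb{Q}_{p}^{n})$. Conceptually, the non-Archimedean Stone--von Neumann theorem quoted just above guarantees that the weighted integral of $\Phi^{*}(g)$ is an intertwiner of $\Phi^{*}$ with itself, hence a scalar by the $p$-adic Schur lemma; the real content of the proposition is therefore the non-vanishing of that scalar, which one extracts from the Gaussian computation above. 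Once this is in hand, the rest of the argument is routine bookkeeping with the Heisenberg cocycle.
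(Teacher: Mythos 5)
Your computational skeleton---unfold the definition of $L_{P_A}$, substitute the defining relation $\langle P_A\psi,\cdot\rangle=L_A(\psi,\cdot)$ to obtain a double integral over $H(V)\times H(V)$, change variables so that one copy of the group decouples from the pairing $\langle \Phi^*(\cdot)\circ A\psi,\varphi\rangle$, and evaluate the leftover integral with the $p$-adic Gaussian formula $\int_{\mathbb{Q}_p}\chi(ax^2+bx)\,dx=\lambda_p(a)|a|_p^{-1/2}\chi(-b^2/4a)$ so that the phases $-b^2/4a$ reassemble the weight $\alpha^{1/2}\chi_{\frac{1}{8}}(\sum_i(x_i^2-y_i^2))$ in the remaining variable---is exactly the paper's proof. (The paper computes the slightly more general covariance $L_{\Phi(g)^*\circ P_A}=C\,\Theta(g)L_A$ and then sets $g=\mathrm{Id}$, but that is cosmetic; the resulting constant is $C=\mathrm{Vol}(\mathbf{S}^1)\,|4|_p\,\lambda_p(\tfrac14)\lambda_p(-\tfrac14)$.) Up to your choice of adjoint convention ($\Phi(g)^{-1}$ versus the transpose $\Phi(g)$ used in the paper, which only relabels the cocycle phases), this part is sound.

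The genuine problem is the ``Schur orthogonality'' identity you place underneath the computation, namely $\int_{H(V)}\alpha^{1/2}\chi_{\frac{1}{8}}(\sum_i(x_i^2-y_i^2))\,\Phi^*(g)\,d\mu_g=C\cdot\mathrm{Id}$. Write $T$ for this operator; the definition of $P_A$ says precisely $P_A=T\circ A$, so if $T$ were $C\cdot\mathrm{Id}$ then $P_A=CA$ and the proposition would be vacuous. What the Gaussian computation actually establishes is $T\circ T=C\,T$, i.e.\ $C^{-1}T$ is an idempotent---the analogue of a coherent-state projector---not a scalar. The Schur-lemma route is unavailable because the weight $g\mapsto\alpha^{1/2}\chi_{\frac{1}{8}}(\sum_i(x_i^2-y_i^2))$ is not a class function on $H(V)$: conjugating $g=(z;\alpha)$ by $(w;\beta)$ replaces $\alpha$ by $\alpha\chi(J(w,z))$, so $T$ does not commute with the operators $\Phi^*(h)$ and is not a self-intertwiner; Stone--von Neumann gives you nothing here. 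Fortunately none of this is needed: the direct double-integral evaluation you describe in your second paragraph is self-contained and is the paper's argument, so you should simply delete the Schur framing rather than try to repair it.
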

\begin{proof}
According to the definition, we calculate
\begin{align*}
& L_{\Phi(g)^*\circ P_{A}}(\psi,\varphi)=\int (\alpha^\prime)^{\frac{1}{2}}\chi_{\frac{1}{8}}(\sum_i((x^\prime_i)^2-(y^\prime_i)^2))\langle\Phi(g^\prime)^*\circ\Phi(g)^*\circ P_A(\psi),\varphi\rangle d\mu_{g^\prime}\\
=&\int (\alpha^\prime)^{\frac{1}{2}}\chi_{\frac{1}{8}}(\sum_i((x^\prime_i)^2-(y^\prime_i)^2))\langle P_A(\psi),\Phi(g)\circ\Phi(g^\prime)(\varphi)\rangle d\mu_{g^\prime}\\
  =&\int(\alpha^\prime)^{\frac{1}{2}}(\alpha^{\prime\prime})^{\frac{1}{2}}\chi_{\frac{1}{8}}(\sum_i((x^\prime_i)^2-(y^\prime_i)^2+(x_i^{\prime\prime})^2-(y^{\prime\prime}_i)^2)) \langle\Phi(g^\prime)^*\circ
\Phi(g)^*\circ\Phi( g^{\prime\prime})^*\circ A(\psi),\varphi\rangle d\mu_{g^\prime } d\mu_{g^{\prime\prime}}.
  \end{align*}
  Let $\tilde g_0=gg^{\prime}, \tilde g=g^{\prime\prime} \tilde g_0$; and  $X_i=x_i+x^{\prime}_i$, $\tilde X_i=x^{\prime\prime}_i+X_i$, $Y_i=y_i+y^{\prime}_i$, $\tilde Y_i=y^{\prime\prime}_i+Y_i$, then we arrive at
  \begin{align*}
  & L_{\Phi(g)^*\circ P_{A}}(\psi,\varphi)\\
  =&\int(\alpha^{\prime\prime})^{\frac{1}{2}}(\alpha^{\prime})^{\frac{1}{2}}\chi_{\frac{1}{8}}(\sum_i((\tilde X_i-X_i)^2-(\tilde Y_i-Y_i)^2+(X_i-x_i)^2-(Y_i-y_i)^2))\langle\Phi(\tilde g)^*\circ A(\psi),\varphi\rangle d\mu_{\tilde g }d\mu_{ \tilde g_0}\\
  =& \alpha^{-\frac{1}{2}}\chi_{\frac{1}{8}}(\sum_i(x_i^2-y_i^2))\int \int\chi_{\frac{1}{4}}(\sum_i(X_i^2-(\tilde X_i+x_i-\tilde Y_i-y_i)X_i-Y_i^2+(\tilde Y_i+y_i-\tilde X_i-x_i)Y_i))d\mu_{ \tilde g_0}\\
  &\cdot(\alpha\alpha^{\prime\prime}\alpha^{\prime}\chi(\frac{1}{2}\sum_i(x_iy_i^{\prime}+x_i^{\prime\prime} y_i+x_i^{\prime\prime} y_i^{\prime}-y_ix_i^{\prime}-y_i^{\prime\prime} x_i-y_i^{\prime\prime} x_i^{\prime})))^{\frac{1}{2}}\chi_{\frac{1}{8}}(\sum_i(\tilde X_i^2-\tilde Y_i^2))\langle\Phi(\tilde g)^*\circ A(\psi),\varphi\rangle d\mu_{\tilde g }\\
  =&C\Theta(g)L_A(\psi,\varphi),
\end{align*}
where \begin{align*}
       \Theta(g)&=\alpha^{-\frac{1}{2}}\chi_{\frac{1}{8}}(\sum_i(x_i^2-y_i^2)), \\ C&=\textrm{Vol}(\mathbf{S}^1)\frac{\lambda_p(\frac{1}{4})\lambda_p(-\frac{1}{4})}{\sqrt{|-\frac{1}{16}|_p}}=\textrm{Vol}(\mathbf{S}^1)|4|_p\lambda_p(\frac{1}{4})\lambda_p(-\frac{1}{4}).
      \end{align*}
 Consequently, we find that $L_{\Phi(g)^*\circ P_{A}}=C\Theta(g)L_A$, which implies that  $L_{ P_{A}}=CL_A$  when one takes $g=Id$.
 \end{proof}

We can define the so-called symplectic Fourier transformation for $\psi\in\mathcal{D}(\mathbb{Q}_p^{2n})$ as follows: $$F_s[\psi](z)=\check{\psi}(z)=\int\chi(J_0(z,z^\prime))\psi(z^\prime)d^{2n}z^\prime,$$
which is related with the usual Fourier transformation  via the formula $F_s[\psi](J_0z)=F[\psi](z)$. Therefore $F_s$ as an operator on $\mathcal{D}(\mathbb{Q}_p^{2n})$ extends into a unitary operator on $L^2(\mathbb{Q}_p^{2n})$. The Weyl operator $W_f$ associated to the symbol $f\in \mathcal{D}(\mathbb{Q}_p^{2n})$ is defined by
\begin{align*}
 W_f[\psi](\xi)&=\int \check{f}(z)T_{(z,1)}[\psi](\xi)d^{2n}z\\
&=\int \check{f}(z)\psi(\xi+x)\chi(\sum_i(y_i\xi_i+\frac{1}{2}x_iy_i))d^{2n}z.
\end{align*}
\begin{proposition} $W_f$ is a linear and continuous  operator on $\mathcal{D}(\mathbb{Q}_p^{n})$, namely $W_f[\psi]\in\mathcal{D}(\mathbb{Q}_p^{n})$ for $\psi\in\mathcal{D}(\mathbb{Q}_p^{n})$.
 \end{proposition}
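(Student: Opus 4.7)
The plan is to verify three properties directly from the defining formula: linearity, compact support of $W_f[\psi]$, and local constancy of $W_f[\psi]$. Linearity is immediate since the integral and the translations $T_{(z,1)}$ are linear, so the real content is showing $W_f[\psi] \in \mathcal{D}(\mathbb{Q}_p^n)$. The first input is the remark just above the proposition that $F_s$ preserves $\mathcal{D}(\mathbb{Q}_p^{2n})$, so $\check{f}$ is itself locally constant and compactly supported on $\mathbb{Q}_p^{2n}$. Fix $N, M \in \mathbb{Z}$ with $\Supp(\check{f}) \subset \{(x,y) : |x|_p, |y|_p \leq p^N\}$ and $\Supp(\psi) \subset \{|\eta|_p \leq p^M\}$. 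Compact support of $W_f[\psi]$ then follows from the ultrametric inequality: nonvanishing of the integrand requires $|x|_p \leq p^N$ and $|\xi + x|_p \leq p^M$, forcing $|\xi|_p \leq \max(p^N, p^M)$.

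The delicate step --- which I expect to be the only real obstacle --- is local constancy. I would compare $W_f[\psi](\xi + \xi')$ with $W_f[\psi](\xi)$ for small $\xi'$ and aim for pointwise equality of the two integrands on the support of $\check{f}$. Two conditions must hold: $\psi(\xi + \xi' + x) = \psi(\xi + x)$, valid when $|\xi'|_p \leq p^{-\ell_\psi}$ for $\ell_\psi$ a constancy modulus of $\psi$; and $\chi(y \cdot \xi') = 1$, which, since $\chi(u) = \exp(2\pi i[u])$, is equivalent to $y \cdot \xi' \in \mathbb{Z}_p$. Because $|y \cdot \xi'|_p \leq |y|_p |\xi'|_p \leq p^N |\xi'|_p$ on the support of $\check{f}$, the condition $|\xi'|_p \leq p^{-N}$ suffices. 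Taking $|\xi'|_p \leq p^{-\max(N,\ell_\psi)}$ therefore yields $W_f[\psi](\xi + \xi') = W_f[\psi](\xi)$, with a modulus of local constancy uniform in $\xi$.

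Finally, continuity of $W_f$ is a formal consequence. The two controlling parameters of $W_f[\psi]$ --- the radius of its support and its modulus of local constancy --- are bounded above by quantities depending only on $f$ and the corresponding parameters of $\psi$. Hence $W_f$ restricts, for each fixed compact $K$ and each $\ell$, to a linear map from the finite-dimensional subspace of $\mathcal{D}(\mathbb{Q}_p^n)$ consisting of functions supported in $K$ and constant on balls of radius $p^{-\ell}$ into another such finite-dimensional subspace. Continuity of each such restriction is automatic, and by the universal property of the inductive-limit topology on $\mathcal{D}(\mathbb{Q}_p^n)$ this is exactly what is needed.
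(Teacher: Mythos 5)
Your proof is correct, and its core step --- the local constancy of $W_f[\psi]$ --- is established exactly as in the paper: one checks that $\chi(y\cdot\xi')=1$ for all $y$ in the bounded support of $\check f$ once $|\xi'|_p\leq p^{-N}$, and that $\psi(\xi+x+\xi')=\psi(\xi+x)$ once $|\xi'|_p$ is below the constancy modulus of $\psi$. The paper reaches the same two conditions after first truncating the integral to $(B(0;-N))^{2n}$ and passing to the limit in $N$, a detour you avoid by invoking the compact support of $\check f$ from the start. Where you genuinely diverge is the compact-support step. The paper derives the identity $W_f[\psi](\xi)=\chi(\sum_i\xi_ia_i)W_f[\psi](\xi)$ from a period $z'=(0,a)$ of $\check f$ and concludes that $W_f[\psi]$ vanishes wherever this character is nontrivial; that route yields a support bound depending only on $f$ (useful if one later wants to let $\psi$ range over a larger space), but as written it glosses over the extra factor $\chi(\tfrac{1}{2}\sum_i x_ia_i)$ produced by the translation $y\mapsto y+a$, which depends on $x$ and does not pull out of the integral, as well as over possible cancellation in $\sum_i\xi_ia_i$ when several $|\xi_ka_k|_p$ tie for the maximum. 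Your argument --- nonvanishing of the integrand forces $x\in\Supp\check f$ and $\xi+x\in\Supp\psi$, hence $|\xi|_p\leq\max(p^N,p^M)$ by the ultrametric inequality --- is more elementary, uses the hypothesis $\psi\in\mathcal{D}(\mathbb{Q}_p^n)$ that is available anyway, and sidesteps both delicate points. You also spell out why the two uniform bounds (support radius and constancy modulus) give continuity for the inductive-limit topology on $\mathcal{D}(\mathbb{Q}_p^n)$, a point the paper leaves entirely implicit.
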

\begin{proof}
From $\check{f}\in \mathcal{D}(\mathbb{Q}_p^{2n})$, it follows that there exists $z^\prime=(0,\cdots,0, a_1, \cdots, a_n )\in\mathbb{Q}_p^{2n}$ with $|a_i|_p=p^l(i=1,\cdots,n)$ for an integer $l$ such that $\check{f}(z)=\check{f}(z+z^\prime)$. Then since
$W_f[\psi](\xi)=\chi(\sum_i\xi_ia_i)W_f[\psi](\xi)$, and since if $|\xi|_p>p^{-l}$, there exists $k\in\{1,\cdots,n\}$ such that $|\xi_ka_k|_p>1$, thus $\chi(\sum_i\xi_ia_i)\neq 1$, we observe that $\textrm{supp}(W_f[\psi])\subset (B(0;l))^n$. To show $W_f[\psi]$ is a locally constant function, we consider
$$W^N_f[\psi](\xi):=\int_{(B(0;-N))^{2n}} \check{f}(z)\psi(\xi+x)\chi(\sum_i(y_i\xi_i+\frac{1}{2}x_iy_i))d^{2n}z$$
for an integer $N$.  Let $l\in \mathbb{Z}$ be the largest characteristic number associated with $\psi$. Then for any $\xi^\prime\in (B(0;N))^n$ we have
$W^N_f[\psi](\xi+\xi^\prime)=W^N_f[\psi](\xi)$ if $N>-l$  because  $|\sum_iy_i\xi^\prime_i|_p\leq \textrm{max}\{|y_i\xi^\prime_i|_p\}\leq1$ and $\psi(\xi+x+\xi^\prime)=\psi(\xi+x)$ when $N>-l$. Consequently, $W_f[\psi]=\lim_{N\rightarrow\infty}W^N_f[\psi]$ is locally constant.
\end{proof}

\begin{proposition}
\begin{enumerate}
                     \item $W_f[\psi](\xi)=\int K_{W_f}(\xi,\eta)\psi(\eta)d^n\eta$, where
 $$K_{W_f}(\xi,\eta)=\int f(\frac{1}{2}(\eta+\xi),y)\chi(\sum_i(\eta_i-\xi_i)y_i)d^{n}y$$
   is called the kernel function   of Weyl operator.
\item $||W_f[\psi]||_{L^2}\leq ||F[f]||_{L^1}||\psi||_{L^2}$.
\item For $f,g\in \mathcal{D}(\mathbb{Q}_p^n)$, we have the composition law
$K_{W_f\circ W_g}=K_{W_h}$,
where  $$h(z)=c\int f(z+z^\prime)g(z+z^{\prime\prime})\chi(-2J_0(z^\prime,z^{\prime\prime}))d^{2n}z^{\prime}d^{2n}z^{\prime\prime}\in \mathcal{D}(\mathbb{Q}_p^n)$$
with $c=\left\{
         \begin{array}{ll}
           1, & \hbox{$p\geq3$;} \\
           \frac{1}{4^n}, & \hbox{$p=2$.}
         \end{array}
       \right.
$
                   \end{enumerate}

\end{proposition}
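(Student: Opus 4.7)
For (1), substitute $\check f(z) = \int f(z') \chi(J_0(z,z'))\, d^{2n}z'$ into the defining integral of $W_f[\psi](\xi)$, perform the change of variable $\eta = \xi + x$, and group the exponent so that it reads $\chi\bigl((\eta-\xi)\cdot y' + y\cdot(\tfrac{1}{2}(\eta+\xi) - x')\bigr)$, where $z' = (x',y')$. Integrating over $y$ produces, in the distributional sense, a delta concentrated on $x' = \tfrac{1}{2}(\eta+\xi)$; eliminating $x'$ and relabelling $y' \to y$ leaves precisely $K_{W_f}(\xi,\eta)$.

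For (2), each $T_z$ is a unitary operator on $L^2(\mathbb{Q}_p^n)$, being the composition of an isometric translation of $\psi$ with multiplication by the unimodular character $\chi(y\cdot\xi + \tfrac{1}{2}x\cdot y)$. Viewing $W_f = \int \check f(z)\, T_z\, d^{2n}z$ as an operator-valued integral, the triangle inequality yields
\begin{align*}
||W_f[\psi]||_{L^2} \le \int |\check f(z)|\, ||T_z\psi||_{L^2}\, d^{2n}z = ||\check f||_{L^1}\, ||\psi||_{L^2}.
\end{align*}
Because $\check f(z) = F[f](J_0 z)$ and $|\det J_0|_p = 1$, the change of variable $w = J_0 z$ gives $||\check f||_{L^1} = ||F[f]||_{L^1}$, which is the claimed bound.

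For (3), first derive the Heisenberg composition rule $T_{z_1} T_{z_2} = \chi(\tfrac{1}{2} J_0(z_1,z_2))\, T_{z_1+z_2}$ directly from the formula for $T_z$. Substituting this into $W_f \circ W_g$ and setting $w = z_1 + z_2$ puts the composition in the form $\int \check h(w)\, T_w\, d^{2n}w$ with
\begin{align*}
\check h(w) = \int \check f(z_1)\, \check g(w-z_1)\, \chi\bigl(\tfrac{1}{2} J_0(z_1, w)\bigr)\, d^{2n}z_1.
\end{align*}
Apply symplectic Fourier inversion to recover $h$, unfold $\check f, \check g$ back into $f, g$, and shift the $f, g$ integration variables so that they appear as $f(z+z')$ and $g(z+z'')$. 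The resulting four-fold integral has phase $\chi\bigl(J_0(z_1, z') + J_0(z_2, z'') + \tfrac{1}{2}J_0(z_1, z_2)\bigr)$. Integration over $z_1$ produces the delta $\delta(z' + \tfrac{1}{2} z_2)$; eliminating $z_2$ via this delta contributes a Jacobian $|2|_p^{2n}$, and the remaining phase collapses to $\chi(-2 J_0(z',z''))$. The constant $c = |2|_p^{2n}$ equals $1$ for $p\ge 3$ and $1/4^n$ for $p=2$, matching the statement.

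The main technical point is to justify the Fubini interchanges and the distributional identity $\int \chi(J_0(z,w))\, d^{2n}z = \delta(w)$ that underlies both (1) and (3). Since $\check f, \check g \in \mathcal{D}(\mathbb{Q}_p^{2n})$ are compactly supported and locally constant, each integration may be restricted to a sufficiently large ball on which the integrand is a finite sum of additive characters, and the explicit character-integral formulas collected in Section~2 then reduce every delta-type manipulation to a clean finite computation. Membership of the composition symbol in the Bruhat--Schwartz space then follows from the preceding proposition applied to $W_f \circ W_g$, which is itself visibly of Weyl form.
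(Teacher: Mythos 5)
Your proposal is correct, and for parts (2) and (3) it takes a genuinely different route from the paper. For (1) the two arguments coincide: both unfold $\check f$, substitute $\eta=\xi+x$, and integrate out $y$ against the resulting character to produce the delta pinning $x'=\tfrac{1}{2}(\xi+\eta)$. For (2) the paper instead rewrites the kernel as $K_{W_f}(\xi,\eta)=\int F[f](x,\eta-\xi)\,\chi(-\tfrac{1}{2}x\cdot(\eta+\xi))\,d^nx$ and runs a Schur-test estimate (Cauchy--Bunjakovsky on $\int K\psi$, then integrating the two marginal bounds $\int|K|\,d^n\eta$, $\int|K|\,d^n\xi\le\|F[f]\|_{L^1}$); your argument via unitarity of $T_z$ and Minkowski's integral inequality, plus $\|\check f\|_{L^1}=\|F[f]\|_{L^1}$ from $|\det J_0|_p=1$, is shorter and makes the mechanism (an operator-norm bound by the $L^1$-norm of the symbol's Fourier transform) more transparent, at the cost of having to justify the vector-valued triangle inequality. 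For (3) the paper composes kernels, $K_{W_f\circ W_g}(\xi,\eta)=\int K_{W_f}(\xi,\zeta)K_{W_g}(\zeta,\eta)\,d^n\zeta$, and then inverts the kernel--symbol correspondence $h(\xi,\eta)=\int K(\xi-\tfrac{1}{2}x,\xi+\tfrac{1}{2}x)\chi(-\eta\cdot x)\,d^nx$, extracting the constant $|-\tfrac14|_p^{-n}=|4|_p^n$ from the change of variables; you instead work entirely on the symbol side via the Heisenberg relation $T_{z_1}T_{z_2}=\chi(\tfrac12 J_0(z_1,z_2))T_{z_1+z_2}$, obtaining the twisted convolution $\check h(w)=\int\check f(z_1)\check g(w-z_1)\chi(\tfrac12 J_0(z_1,w))\,dz_1$ and then applying symplectic Fourier inversion, with the Jacobian $|2|_p^{2n}=|4|_p^n$ appearing in the final substitution --- the same constant by a cleaner bookkeeping. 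Both versions rest on the same distributional identity $F[1]=\delta$, which you correctly reduce to the finite character sums of Section 2. The only soft spot is your last sentence: membership of $h$ in the Bruhat--Schwartz space does not follow from the preceding proposition (which concerns $W_f[\psi]$, not the symbol); it should instead be read off directly from your formula for $\check h$, whose support lies in $\mathrm{supp}\,\check f+\mathrm{supp}\,\check g$ and whose local constancy is inherited from $\check f$, $\check g$, so that $h=F_s[\check h]\in\mathcal D(\mathbb{Q}_p^{2n})$. This is a one-line repair, not a gap in the main computation.
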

\begin{proof}(1) Let $\eta=\xi+x$, then we have
\begin{align*}
  W_f[\psi](\xi)&=\int\check{f}(\eta-\xi,y)\psi(\eta)\chi(\sum_i\frac{1}{2}y_i(\eta_i+\xi_i))d^n\eta d^ny\\
&=\int f(z^\prime)\psi(\eta)\chi(\sum_i(\eta_i-\xi_i)y^\prime_i)\chi(\sum_i\frac{1}{2}y_i(\eta_i+\xi_i-2x^\prime_i))d^{2n}z^\prime d^n\eta d^ny\\
&=\int f(z^\prime)\psi(\eta)\chi(\sum_i(\eta_i-\xi_i)y^\prime_i)\delta(\frac{1}{2}(\eta_i+\xi_i-2x^\prime_i))d^{2n}z^\prime d^n\eta\\
&=\int f(\frac{1}{2}(\eta+\xi),y^\prime)\psi(\eta)\chi(\sum_i(\eta_i-\xi_i)y^\prime_i)d^{n}y^\prime d^n\eta
\end{align*}
where we have applied the Fourier transform of Dirac distribution $\delta$ for the third equality\cite{ak}: $F[\delta]=1$ and $F[1]=\delta$.

(2) First we note that $F[f]\in L^1(\mathbb{Q}_p^{2n})$ since $F[f]$ belongs to $\mathcal{D}(\mathbb{Q}_p^{2n})$ which is dense in $L^1(\mathbb{Q}_p^{2n})$\cite{ak}. By  manipulation of the Fourier transform of Dirac distribution once again, we express the kernel function as
 \begin{align*}
    K_{W_f}(\xi,\eta)&=\int f(k,y)  \chi(\sum_i(\eta_i-\xi_i)y_i)\chi(\sum_i\frac{1}{2}x_i(2k_i-\eta_i-\xi_i))d^nk d^nyd^nx\\
&=  \int F[f](x,\eta-\xi)  \chi (-\sum_i\frac{1}{2}x_i(\eta_i+\xi_i))  d^nx,                                                                             \end{align*}
which leads to the desired inequalities
\begin{align*}
 \int| W_f[\psi](\xi)|^2d^n\xi&\leq\int(\int |K_{W_f}(\xi,\eta)|d^n\eta)(\int |K_{W_f}(\xi,\eta)|\cdot |\psi(\eta)|^2d^n\eta)d^n\xi\\
&\leq(\int |F[f](\xi,\eta)|d^n\xi d^n\eta)^2\int  |\psi(\eta)|^2 d^n\eta.
\end{align*}

(3) One can easily check that $ K_{W_f\circ W_g}(\xi,\eta)=\int K_{W_f}(\xi,\zeta)K_{W_g}(\zeta,\eta)d^n\zeta$. Then since
\begin{align*}
 & \int K_{W_f}(\xi,\zeta)K_{W_g}(\zeta,\eta)d^n\zeta\\=&\int f(\frac{1}{2}(\xi+\zeta),x)g(\frac{1}{2}(\zeta+\eta),y)\chi(\sum_i(\zeta_i-\xi_i)x_i)\chi(\sum_i(\eta_i-\zeta_i)y_i)d^nxd^nyd^n\zeta,
\end{align*}
we get
\begin{align*}
  h(\xi,\eta)&=\int K_{W_f\circ W_g}(\xi-\frac{1}{2}x,\xi+\frac{1}{2}x)\chi(-\sum_i\eta_ix_i)d^nx\\
&=\int f(\frac{1}{2}(\xi-\frac{1}{2}x+\zeta),\alpha)g(\frac{1}{2}(\xi+\frac{1}{2}x+\zeta),\beta)\\
&\ \ \ \ \ \ \ \chi(\sum_i(\zeta_i-\xi_i+\frac{1}{2}x)\alpha_i)
\chi(\sum_i(\xi_i-\zeta_i+\frac{1}{2}x)\alpha_i)\chi(-\sum_i\eta_ix_i)d^n\alpha d^n\beta d^n\zeta d^nx\\
&=|-\frac{1}{4}|^{-n}_p\int f(\xi+u,\eta+\alpha^\prime)g(\xi+v,\eta+\beta^\prime)\chi(2\sum_i(v_i\alpha^\prime_i-u_i\beta^\prime_i))d^n\alpha^\prime d^n\beta^\prime d^nu d^nv,
\end{align*}
where the constant $|-\frac{1}{4}|^{-n}_p=|4|^{n}_p$ in front of the integral comes from the variable changing $u=\frac{1}{2}(-\frac{1}{2}x-\xi+\eta)$, $v=\frac{1}{2}(\frac{1}{2}x-\xi+\eta)$, $\alpha^\prime=\alpha-\eta$ and $\beta^\prime=\beta-\eta$.
\end{proof}

\subsection{Induced representations and Maslov indices}
Let $\Gamma$ be an abelian  subgroup  of the Heisenberg group $H(V)$, then we have a smooth representation $\Upsilon(\Gamma,\mathcal{C})$ of $H(V)$ induced by the given unitary character $\mathcal{C}$ of $\Gamma$   restricting to the
identity on  the center $\mathbf{S}^1$ of $H(V)$ and being locally constant  restricted on $V$, namely $\Upsilon(L,\mathcal{C})=\textrm{Ind}_\Gamma^{H(V)}\mathcal{C}$,
which can be realized as follows. Let $\mathbb{H}(\Gamma,\mathcal{C})$
be the set of complex valued  functions on $H(V)$ which are locally constant compactly supported when restricted  on $V$ and satisfy the conditions $\psi(\gamma g)=\mathcal{C}(\gamma)\psi(g)$ for any $g\in H(V),\gamma\in \Gamma$. Then the induced  representation $\Upsilon(\Gamma,\mathcal{C})$  is defined to be the representation
of $H(V)$ in $\mathbb{H}(\Gamma,\mathcal{C})$ given by right translations:
$\Upsilon(\Gamma,\mathcal{C})(g_0)[\psi](g)=\psi(g g_0)$ for $\psi\in \mathbb{H}$, $g_0, g\in H(V)$. Obviously, $\Upsilon(\Gamma,\mathcal{C})(\alpha)=\alpha Id$ for $\alpha\in \mathbf{S}^1$.  One can show that $(\Upsilon(\Gamma,\mathcal{C}),\mathbb{H})$ is an irreducible\cite{ho} and admissible representation.
 Stone-von Neumann theorem implies that every induced representation $\Upsilon(\Gamma,\mathcal{C})$ is isomorphic to the Schr\"{o}dinger representation, thus there exists a  unique unitary isomorphism $\Theta(\Gamma,\mathcal{C};\Gamma^\prime,\mathcal{C}^\prime):\mathbb{H}(\Gamma,\mathcal{C})\rightarrow\mathbb{H}(\Gamma^\prime,\mathcal{C}^\prime)$ determined by the relation
$\Theta(\Gamma,\mathcal{C};\Gamma^\prime,\mathcal{C}^\prime)\Upsilon(\Gamma,\mathcal{C})\Theta^{-1}(\Gamma,\mathcal{C};\Gamma^\prime,\mathcal{C}^\prime)=\Upsilon(\Gamma^\prime,\mathcal{C}^\prime)$ up to a unit scalar  related to $\Gamma,\Gamma^\prime$.
This isomorphism is called the intertwining operator between the two corresponding representations.

For example, we can take $\Gamma=(L,\mathbf{S}^1)=:\Gamma(L)$ or $\Gamma=(\ell,\mathbf{S}^1)=:\Gamma(\ell)$ for any self-dual lattice $L$ or Lagrangian subspace $\ell$ in $(V,J)$. By the restriction from  $H(V)$ to $V$, the function $\psi\in \mathbb{H}(\Gamma,\mathcal{C})$ descends to a function on $V$ denoted still by $\psi$ that satisfies $\psi(z+\gamma)=\chi(\frac{1}{2}J(z,\gamma))\mathcal{C}((\gamma;1))\psi(z)$ for $z\in V,\gamma\in L \textrm{ or } \ell$,
and $\Upsilon(\Gamma,\mathcal{C})(g_0)[\psi](z)=\alpha\chi(\frac{1}{2}J(z,w))\psi(z+w)$ for $g_0=(w;\alpha)\in H(V)$. For a self-dual lattice $L$, there exist two transversal Lagrangian subspaces $\ell$ and $\ell^\prime$, where the transversality condition means  that $\ell\cap\ell^\prime=\{0\}$,  such that $L=\ell\cap L\oplus \ell^\prime\cap L$. The characters $\mathcal{C}_\ell, \mathcal{C}_L$ of $\Gamma(\ell)$ and $\Gamma(L)$ are specified such that
$\mathcal{C}_L((\ell\cap L;1))=\mathcal{C}_\ell((\ell\cap L;1))$.
\begin{proposition}The isomorphism $\Theta(\Gamma(\ell),\mathcal{C}_\ell;\Gamma(L),\mathcal{C}_L):\mathbb{H}(\Gamma(\ell),\mathcal{C}_\ell)\rightarrow\mathbb{H}(\Gamma(L),\mathcal{C}_{L})$ is given by
\begin{align*}
 \Theta(\Gamma(\ell),\mathcal{C}_\ell;\Gamma(L),\mathcal{C}_L)[\psi](g)\sim\sum_{u\in\ell^\prime \cap L}\mathcal{C}_L((-u;1))\psi((u;1)\cdot g).
\end{align*}
where $\sim$ means the equality up to a constant of modulus one related to  $L$ and $\ell$.
\end{proposition}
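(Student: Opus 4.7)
The plan is to define the candidate operator
\[
T[\psi](g):=\sum_{u\in\ell'\cap L}\mathcal{C}_L((-u;1))\,\psi((u;1)\cdot g)
\]
directly and verify that it is a non-zero $H(V)$-intertwiner from $\Upsilon(\Gamma(\ell),\mathcal{C}_\ell)$ to $\Upsilon(\Gamma(L),\mathcal{C}_L)$; once this is done, the non-Archimedean Stone-von Neumann theorem combined with the irreducibility of both induced representations forces $T$ to agree with the unique intertwiner $\Theta(\Gamma(\ell),\mathcal{C}_\ell;\Gamma(L),\mathcal{C}_L)$ up to a scalar of modulus one, which is exactly the asserted conclusion.

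The verification splits into three checks. First, for every fixed $g\in H(V)$ the sum is essentially finite: because $|\psi|$ descends to a compactly supported function on $V/\ell$ and $\ell'\cap\ell=\{0\}$, the lattice $\ell'\cap L$ injects into $V/\ell$ as a discrete subset, so only finitely many translates $(u;1)\cdot g$ land in the support of $\psi$. Second, the intertwining property with the right-translation action is automatic: for every $g_0\in H(V)$ and every $g$,
\[
T[\Upsilon(\Gamma(\ell),\mathcal{C}_\ell)(g_0)\psi](g)=\sum_{u}\mathcal{C}_L((-u;1))\,\psi((u;1)\cdot g\cdot g_0)=\Upsilon(\Gamma(L),\mathcal{C}_L)(g_0)\,T[\psi](g),
\]
so no work is required here.

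The substantive step is to verify the left-covariance $T[\psi]((\gamma;1)\cdot g)=\mathcal{C}_L((\gamma;1))\,T[\psi](g)$ for every $\gamma\in L$, which certifies $T[\psi]\in\mathbb{H}(\Gamma(L),\mathcal{C}_L)$. I would decompose $\gamma=\gamma_1+\gamma_2$ with $\gamma_1\in\ell\cap L$, $\gamma_2\in\ell'\cap L$, expand $(u;1)(\gamma;1)$ via \eqref{c}, and then pull $(\gamma_1;1)$ to the leftmost position using the same composition law, so that the $\Gamma(\ell)$-covariance of $\psi$ produces a factor $\mathcal{C}_\ell((\gamma_1;1))=\mathcal{C}_L((\gamma_1;1))$. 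Reindexing the sum by $v=u+\gamma_2\in\ell'\cap L$ and using $J(-v,\gamma_2)=0$ to split $\mathcal{C}_L((-v+\gamma_2;1))=\mathcal{C}_L((-v;1))\mathcal{C}_L((\gamma_2;1))$, all the cocycles collapse into a leading $\mathcal{C}_L((\gamma;1))$ multiplied by two residual characters of the form $\chi(J(v,\gamma_1))$ (inside the sum) and $\chi(J(\gamma_1,\gamma_2))$ (outside). The crucial input is that $v,\gamma_1,\gamma_2$ all lie in the self-dual lattice $L$, so $J(v,\gamma_1)$ and $J(\gamma_1,\gamma_2)$ both lie in $\mathbb{Z}_p$; since the additive character $\chi$ is trivial on $\mathbb{Z}_p$ (its fractional-part argument vanishes there), both residual characters equal $1$, and the covariance identity emerges.

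The main obstacle is purely bookkeeping in this third step: one must carefully track every 2-cocycle produced by the Heisenberg composition law when moving $(\gamma;1)$ past $(u;1)$, when decomposing $(\gamma;1)=(\gamma_1;1)(\gamma_2;1)$ up to a half-pairing phase, and when splitting the $\mathcal{C}_L$-factor after reindexing, and to confirm that everything which is not of the form $\mathcal{C}_L((\gamma;1))$ reduces to a $\chi$-phase on an integer symplectic pairing. Once the self-duality of $L$ is invoked in precisely these places, no further analytic input is required, and Stone-von Neumann hands over the statement up to the permitted modulus-one scalar.
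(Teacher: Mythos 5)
Your overall strategy is legitimately different from the paper's: you want to exhibit the right-hand side as a nonzero intertwiner and then invoke irreducibility (Schur's lemma) to identify it with $\Theta(\Gamma(\ell),\mathcal{C}_\ell;\Gamma(L),\mathcal{C}_L)$ up to a scalar, whereas the paper never appeals to uniqueness at this point: it verifies membership in $\mathbb{H}(\Gamma(L),\mathcal{C}_L)$ in explicit coordinates, then \emph{constructs the inverse operator} $\Theta(\Gamma(L),\mathcal{C}_L;\Gamma(\ell),\mathcal{C}_\ell)$ as an integral over $\ell/\ell\cap L$ and checks that the composition is a scalar multiple of the identity, finally getting unitarity from the Parseval--Steklov equality. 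Your route would be shorter if it worked, and your cocycle bookkeeping for the left $\Gamma(L)$-covariance (decomposing $\gamma=\gamma_1+\gamma_2$, using $\mathcal{C}_\ell=\mathcal{C}_L$ on $\ell\cap L$, isotropy of $\ell'$, and triviality of $\chi$ on the $\mathbb{Z}_p$-valued pairings coming from self-duality of $L$) is exactly the content the paper waves off as ``easily verified,'' so that part is fine.

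There are, however, three genuine gaps. First, your convergence argument rests on a false picture of $p$-adic lattices: $\ell'\cap L\simeq\mathbb{Z}_p^n$ is a \emph{compact open} subgroup of $\ell'\simeq\mathbb{Q}_p^n$, not a discrete one, so it does not inject into $V/\ell$ as a discrete subset and the sum is not ``essentially finite'' --- it is an uncountable formal sum that must be read as an integral against the normalized Haar measure on the compact group $\ell'\cap L$ (which is how the paper's coordinate expression $\sum_{u\in\mathbb{Z}_p^n}$ is to be understood); well-definedness then comes from local constancy of the integrand on a compact domain, a different argument from the one you give. Second, you never verify that $T$ is nonzero, nor that $T[\psi]$ restricted to $V$ is locally constant and compactly supported --- both are needed before Schur's lemma applies, and the paper spends a visible portion of its proof on exactly the latter two properties. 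Third, and most importantly for the statement as written, Schur's lemma only yields $T=c\,\Theta$ for \emph{some} $c\in\mathbb{C}^\times$; the proposition asserts equality up to a constant of \emph{modulus one}, which requires knowing that $T$ is an isometry for the chosen normalization. That is an additional computation (the paper gets it from Parseval--Steklov applied to its explicit Fourier-type formulas) which your argument omits entirely. Until the nonvanishing, the membership in $\mathbb{H}(\Gamma(L),\mathcal{C}_L)$, and the isometry are supplied, the proposal does not establish the stated result.
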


\begin{proof}We first need to show $\Theta(\Gamma(\ell),\mathcal{C}_\ell;\Gamma(L),\mathcal{C}_L)[\psi]\in \mathbb{H}(\Gamma(L),\mathcal{C}_L)$ for $\psi\in \mathbb{H}(\Gamma(\ell),\mathcal{C}_\ell)$. If one chooses a suitable symplectic basis $\{e_1,\cdots,e_n,f_1,\cdots,f_n\}$ of $V$ such that $L=\mathbb{Z}_pe_1\oplus\cdots\oplus\mathbb{Z}_pe_n\oplus\mathbb{Z}_pf_1\oplus\cdots\oplus\mathbb{Z}_pf_n$, $\ell=\mathbb{Q}_pe_1\oplus\cdots\oplus\mathbb{Q}_pe_n$ and
$\ell^\prime=\mathbb{Q}_pf_1\oplus\cdots\oplus\mathbb{Q}_pf_n$, then for $g=((x,y);\alpha)$ we express
\begin{align*}
  \Theta(\Gamma(\ell),\mathcal{C}_\ell;\Gamma(L),\mathcal{C}_L)[\psi](g)
  \sim\alpha\mathcal{C}_\ell(x)\chi(-\sum_i\frac{1}{2}x_iy_i)\sum_{u\in\mathbb{Z}_p^n}\mathcal{C}_L(-u)
  \chi(-\sum_iu_ix_i)\psi(u+y),
\end{align*}
where  $\mathcal{C}_\ell(x):=\mathcal{C}_\ell(((x,0);1))$, $\mathcal{C}_L(u):=\mathcal{C}_L(((0,u);1))$ and  $\psi(u+y):=\psi(((0,u+y);1))$.
The condition  that $\Theta(\Gamma(\ell),\mathcal{C}_\ell;\Gamma(L),\mathcal{C}_L)[\psi](\gamma g)=\mathcal{C}_L(\gamma)\Theta(\Gamma(\ell),\mathcal{C}_\ell;\Gamma(L),\mathcal{C}_L)[\psi](g)$ for any $\gamma\in\Gamma(L)$ can be easily  verified.
Obviously $\Theta(\Gamma(\ell),\mathcal{C}_\ell;\Gamma(L),\mathcal{C}_L)[\psi]$ is locally constant restricted on $V$ since the characteristic functions associated with locally constant functions  $\chi(-\sum_iu_ix_i)$ and $\psi(u+y)$ on $\mathbb{Q}_p^n$ can be taken independent of $u$. On the other hand, there exists $a\in \mathbb{Z}_p^n$ with $|a|_p=p^l$ for some integer $l$ such that $\Theta(\Gamma(\ell),\mathcal{C}_\ell;\Gamma(L),\mathcal{C}_L)[\psi](g)=\mathcal{C}_L((0,a);1)\chi(\sum_ix_ia_i)\Theta(\Gamma(\ell),\mathcal{C}_\ell;\Gamma(L),\mathcal{C}_L)[\psi](g)$, meanwhile, we have  $\{y\in \mathbb{Q}_p^n:\Theta(\Gamma(\ell),\mathcal{C}_\ell;\Gamma(L),\mathcal{C}_L)[\psi]((x,y);\alpha)\neq0\}\subset\bigcup_{u\in\mathbb{Z}_p^n}\{y-u:\psi(((0,y);1))\neq0\}$, the latter set is bounded in $\mathbb{Q}_p^n$. Therefor $\Theta(\Gamma(\ell),\mathcal{C}_\ell;\Gamma(L),\mathcal{C}_L)[\psi]$ is compactly supported on $V$. Next we should prove $\Theta(\Gamma(\ell),\mathcal{C}_\ell;\Gamma(L),\mathcal{C}_L)$  is surjective. To show it, we define the intertwining operator $\Theta(\Gamma(L),\mathcal{C}_L;\Gamma(\ell),\mathcal{C}_\ell):\mathbb{H}(\Gamma(L),\mathcal{C}_L)\rightarrow\mathbb{H}(\Gamma(\ell),\mathcal{C}_{\ell})$ by
\begin{align*}
 \Theta(\Gamma(L),\mathcal{C}_L;\Gamma(\ell),\mathcal{C}_\ell)[\psi](g)\sim&\int_{\ell/\ell\cap L}\mathcal{C}_\ell((-u;1))\psi((u;1)\cdot g)d\mu_{\ell/\ell\cap L}\\
 &=\int_{\mathbb{Q}_p^n/\mathbb{Z}_p^n}\alpha\mathcal{C}_\ell(-u)\chi(\frac{1}{2}\sum_iu_iy_i)\psi(u+x,y)d^nu\\
 &=\alpha\mathcal{C}_\ell(x)\chi(-\sum_i\frac{1}{2}x_iy_i)\int_{\mathbb{Q}_p^n/\mathbb{Z}_p^n}\mathcal{C}_\ell(-u)\chi(\frac{1}{2}\sum_iu_iy_i)\psi(u,y)d^nu,
\end{align*}
where regarding $u$ as a representative of an equivalent class in $\ell/\ell\cap L$ makes sense, and   $\psi(u+x,y):=\psi(((u+x,y);1))$. The similar arguments implies that $\Theta(\Gamma(L),\mathcal{C}_L;\Gamma(\ell),\mathcal{C}_\ell)[\psi]\in\mathbb{H}(\Gamma(\ell),\mathcal{C}_\ell)$.
We also  have to show $\Theta(\Gamma(\ell),\mathcal{C}_\ell;\Gamma(L),\mathcal{C}_L)\circ\Theta(\Gamma(L),\mathcal{C}_L;\Gamma(\ell),\mathcal{C}_\ell)\sim Id$. Indeed, for $\psi\in \mathbb{H}(\Gamma(L),\mathcal{C}_L)$ we calculate
\begin{align*}
  &\Theta(\Gamma(\ell),\mathcal{C}_\ell;\Gamma(L),\mathcal{C}_L)\circ\Theta(\Gamma(L),\mathcal{C}_L;\Gamma(\ell),\mathcal{C}_\ell)[\psi](g)\\ \sim&
  \alpha\sum_{v\in\mathbb{Z}_p^n}\int_{\mathbb{Q}_p^n/\mathbb{Z}_p^n}\mathcal{C}_\ell(-u)\chi(\sum_i(u_iv_i+\frac{1}{2}u_iy_i))\psi(u+x,y)d^nu\\
  &=\psi((x,y);\alpha).
\end{align*}
Finally, the isometry property of these isomorphisms can been immediately seen from the Parseval-Steklov equality for Fourier transformations.
\end{proof}
\begin{corollary}Let $\ell_1, \ell_2$ be two  Lagrangian subspaces in $(V,J)$, and the characters $\mathcal{C}_{\ell_1}$, $\mathcal{C}_{\ell_1}$ of $\Gamma(\ell_1)$, $\Gamma(\ell_2)$  be chosen to satisfies  $\mathcal{C}_{\ell_1}((\ell_1\cap\ell_2;1))=\mathcal{C}_{\ell_2}((\ell_1\cap\ell_2;1))$, then the isomorphism $\Theta(\Gamma(\ell_1),\mathcal{C}_{\ell_1};\Gamma(\ell_2),\mathcal{C}_{\ell_2}):\mathbb{H}(\Gamma(\ell_1),\mathcal{C}_{\ell_1})\rightarrow\mathbb{H}(\Gamma(\ell_2),\mathcal{C}_{\ell_2})$ is given by
\begin{align}\label{eq:p}
\Theta(\Gamma(\ell_1),\mathcal{C}_{\ell_1};\Gamma(\ell_2),\mathcal{C}_{\ell_2})[\psi](g)\sim&\int_{\ell_2/\ell_1\cap\ell_2}\mathcal{C}_{\ell_2}((-u;1))\psi((u;1)\cdot g)d\mu_{\ell_2/\ell_1\cap\ell_2}\\
&=\alpha\mathcal{C}_{\ell_2}(x)\chi(-\sum_{i=1}^n\frac{1}{2}x_iy_i)\int_{\mathbb{Q}_p^m}\mathcal{C}_{\ell_2}(-u)\chi(-\sum_{i=1}^mu_ix_i)\psi(u+y)d^mu,\nonumber
\end{align}
where the measure $\mu_{\ell_2/\ell_1\cap\ell_2}$ on $\ell_2/\ell_1\cap\ell_2$ has to be suitably chosen to guarantee the unitarity of intertwining operators, and the second equality is expressed in terms of the suitable symplectic basis of $(V,J)$ with $m=n-\dim_{\mathbb{Q}_p}\ell_1\cap\ell_2$ and $\mathcal{C}_{\ell_2}(u)=\mathcal{C}_{\ell_2}(((u_1,\cdots,u_m,0,\cdots,0);1))$, $\psi(u+y)=\psi(((u_1,\cdots,u_m,0,\cdots,0,y_1,\cdots,y_n);1))$.
                   \end{corollary}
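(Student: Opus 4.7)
The strategy is a direct adaptation of the preceding Proposition, with the self-dual lattice $L$ replaced by the Lagrangian $\ell_2$ and the transversal Lagrangian $\ell$ replaced by $\ell_1$. The kernel of the relevant restriction is now $\ell_1\cap\ell_2$, which is in general non-trivial, so we must integrate over the quotient $\ell_2/\ell_1\cap\ell_2$ rather than sum over a discrete transversal complement as in the lattice case.

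First, I would choose a symplectic basis $\{e_1,\ldots,e_n,f_1,\ldots,f_n\}$ of $V$ adapted to the pair $(\ell_1,\ell_2)$: setting $m=n-\dim_{\mathbb{Q}_p}(\ell_1\cap\ell_2)$, arrange that $\ell_1=\textrm{Span}_{\mathbb{Q}_p}\{e_1,\ldots,e_n\}$, $\ell_1\cap\ell_2=\textrm{Span}_{\mathbb{Q}_p}\{e_{m+1},\ldots,e_n\}$ and $\ell_2=\textrm{Span}_{\mathbb{Q}_p}\{f_1,\ldots,f_m,e_{m+1},\ldots,e_n\}$. The existence of such a basis is a standard step in symplectic linear algebra, since $\ell_1\cap\ell_2$ is isotropic and both $\ell_i$ are Lagrangian of dimension $n$. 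Unwinding the Heisenberg composition law in these coordinates immediately yields the explicit second expression in \eqref{eq:p} once the section $\ell_2/\ell_1\cap\ell_2\simeq\mathbb{Q}_p^m$ along $\{f_1,\ldots,f_m\}$ is fixed.

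Next, I would verify that $\Theta(\Gamma(\ell_1),\mathcal{C}_{\ell_1};\Gamma(\ell_2),\mathcal{C}_{\ell_2})[\psi]$ actually lies in $\mathbb{H}(\Gamma(\ell_2),\mathcal{C}_{\ell_2})$. The covariance $\Theta[\psi](\gamma\cdot g)=\mathcal{C}_{\ell_2}(\gamma)\Theta[\psi](g)$ for $\gamma\in\Gamma(\ell_2)$ follows from a translation change of variables in the Haar integral over $\ell_2/\ell_1\cap\ell_2$, while for contributions coming from $\Gamma(\ell_1\cap\ell_2)$ the hypothesis $\mathcal{C}_{\ell_1}((\ell_1\cap\ell_2;1))=\mathcal{C}_{\ell_2}((\ell_1\cap\ell_2;1))$ together with the Heisenberg cocycle ensures the integrand descends well-definedly to the quotient. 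Local constancy and compact support (restricted to $V$) are then checked exactly as in the previous Proposition: the characteristic lengths of the locally constant factors $\chi(-\sum u_ix_i)$ and $\psi(u+y)$ can be taken uniform in $u$ over a suitable compact fundamental set, and the support condition on $\psi$ forces integrability.

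To exhibit $\Theta$ as a unitary isomorphism I would mirror the previous argument: define the reverse intertwining operator $\Theta(\Gamma(\ell_2),\mathcal{C}_{\ell_2};\Gamma(\ell_1),\mathcal{C}_{\ell_1})$ by the analogous integral over $\ell_1/\ell_1\cap\ell_2$, and verify that the two compositions equal the identity up to a unit scalar via the $p$-adic Fourier inversion formula applied in the $m$ transversal variables. Unitarity is then a direct invocation of Parseval-Steklov, provided the Haar measures on $\ell_2/\ell_1\cap\ell_2$ and $\ell_1/\ell_1\cap\ell_2$ are calibrated against each other. The main obstacle will be the careful bookkeeping of these measure normalizations in the non-transversal case ($\ell_1\cap\ell_2\neq\{0\}$): the normalization must simultaneously make the integrand descend to $\ell_2/\ell_1\cap\ell_2$ and make the inverse-composition scalar a unit modulus scalar --- precisely the constant alluded to in the statement. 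Once this calibration is settled, all remaining analytic estimates reduce to the transversal case already handled.
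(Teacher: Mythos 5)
Your proposal is correct in outline, but it takes a genuinely different route from the paper. You prove the corollary \emph{directly}, by transplanting the proof of the preceding Proposition to the pair $(\ell_1,\ell_2)$: you define the operator by the integral over $\ell_2/\ell_1\cap\ell_2$, check that the integrand descends to the quotient (here the compatibility hypothesis $\mathcal{C}_{\ell_1}|_{\ell_1\cap\ell_2}=\mathcal{C}_{\ell_2}|_{\ell_1\cap\ell_2}$ together with the vanishing of $J$ on each Lagrangian is exactly what is needed), verify covariance, local constancy and compact support, and then invert by the analogous integral over $\ell_1/\ell_1\cap\ell_2$ via $p$-adic Fourier inversion. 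The paper instead \emph{reduces} the corollary to the Proposition: it fixes an auxiliary self-dual lattice $L$ adapted to the chosen symplectic basis, writes $\Theta(\Gamma(\ell_1),\mathcal{C}_{\ell_1};\Gamma(\ell_2),\mathcal{C}_{\ell_2})$ as the composition $\Theta(\Gamma(L),\mathcal{C}_{L};\Gamma(\ell_2),\mathcal{C}_{\ell_2})\circ\Theta(\Gamma(\ell_1),\mathcal{C}_{\ell_1};\Gamma(L),\mathcal{C}_{L})$, splits $V=V_0\oplus V'$ with $V_0$ carrying $\ell_1\cap\ell_2$ so that each factor becomes a tensor product of already-constructed intertwiners, and then collapses the resulting sum over $L'/L'\cap\ell_1'$ against the integral over $\ell_2'/\ell_2'\cap L'$ into the single integral over $\ell_2'$. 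The paper's route buys economy: well-definedness, membership in $\mathbb{H}(\Gamma(\ell_2),\mathcal{C}_{\ell_2})$ and unitarity are inherited from the Proposition, and only the composition identity needs to be computed; the cost is the auxiliary choices of $L$, $\mathcal{C}_L$ and the tensor factorization. Your route is self-contained and avoids those choices, but obliges you to redo all the analytic verifications, and in particular to confront head-on the measure normalization on $\ell_2/\ell_1\cap\ell_2$ that the paper sidesteps by inheriting it from the lattice intertwiners; you correctly flag this calibration as the main point requiring care, so I see no genuine gap, only more bookkeeping than the paper's argument.
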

\begin{proof}There exists a
   symplectic basis $\{e_1,\cdots,e_n,f_1,\cdots,f_n\}$ such that $\ell_1=\mathbb{Q}_pe_1\oplus\cdots\oplus\mathbb{Q}_pe_n$ and $\ell_2=\mathbb{Q}_pf_1\oplus\mathbb{Q}_pf_{m}\oplus\mathbb{Q}_pe_{m+1}\cdots\oplus\mathbb{Q}_pe_n$.  We fix a self-dual lattice $L=\mathbb{Z}_pe_1\oplus\cdots\oplus\mathbb{Z}_pe_n\oplus\mathbb{Z}_pf_1\oplus\cdots\oplus\mathbb{Z}_pf_n$. Then $\ell_1\cap\ell_2=\mathbb{Q}_pe_{m+1}\cdots\oplus\mathbb{Q}_pe_n$ and $L_0=\mathbb{Z}_pe_{m+1}\cdots\oplus\mathbb{Z}_pe_n\oplus\mathbb{Z}_pf_{m+1}\cdots\oplus\mathbb{Z}_pf_n$ are Lagrangian subspace and self-dual lattice in $(V_0=\mathbb{Q}_pe_{m+1}\cdots\oplus\mathbb{Q}_pe_n\oplus\mathbb{Q}_pf_{m+1}\cdots\oplus\mathbb{Q}_pf_n,J|_{V_0})$ respectively. The isomorphism
   $\Theta(\Gamma(\ell_1),\mathcal{C}_{\ell_1};\Gamma(\ell_2),\mathcal{C}_{\ell_2})$ can be constructed via the composition
\begin{align*}
  \Theta(\Gamma(\ell_1),\mathcal{C}_{\ell_1};\Gamma(\ell_2),\mathcal{C}_{\ell_2})\sim
  \Theta(\Gamma(L),\mathcal{C}_{L};\Gamma(\ell_2),\mathcal{C}_{\ell_2})\circ\Theta(\Gamma(\ell_1),\mathcal{C}_{\ell_1};\Gamma(L),\mathcal{C}_{L}),
\end{align*}
where the character $\mathcal{C}_L$ is fixed with the properties that $\mathcal{C}_L((\ell_{1,2}\cap L;1))=\mathcal{C}_{\ell_{1,2}}((\ell_{1,2}\cap L;1))$. On the other hand, we observe that
\begin{align*}
 &\Theta(\Gamma(\ell_1),\mathcal{C}_{\ell_1};\Gamma(L),\mathcal{C}_{L})\\
 \sim&\Theta(\Gamma(\ell_1\cap\ell_2),\mathcal{C}_{\ell_1}|_{\ell_1\cap\ell_2};\Gamma(L_0),\mathcal{C}_{L}|_{L_0})\otimes
\Theta(\Gamma(\ell_1^\prime),\mathcal{C}_{\ell_1}|_{\ell_1^\prime};\Gamma(L^\prime),\mathcal{C}_{L}|_{L^\prime}),\\
&\Theta(\Gamma(L),\mathcal{C}_{L};\Gamma(\ell_2),\mathcal{C}_{\ell_2})\\
\sim&\Theta(\Gamma(L_0),\mathcal{C}_{L}|_{L_0};\Gamma(\ell_1\cap\ell_2),\mathcal{C}_{\ell_2}|_{\ell_1\cap\ell_2})\otimes
\Theta(\Gamma(L^\prime),\mathcal{C}_{L}|_{L^\prime};\Gamma(\ell_2^\prime),\mathcal{C}_{\ell_2}|_{\ell_2^\prime}).
\end{align*}
where $\ell^\prime_{1}=\ell_{1}/\ell_1\cap\ell_2=\mathbb{Q}_pe_1\oplus\cdots\oplus\mathbb{Q}_pe_m$, $\ell^\prime_{2}=\ell_{2}/\ell_1\cap\ell_2=\mathbb{Q}_pf_1\oplus\cdots\oplus\mathbb{Q}_pf_m$ and $L^\prime=L/L_0=\mathbb{Z}_pe_1\oplus\cdots\oplus\mathbb{Z}_pe_m\oplus\mathbb{Z}_pf_1\oplus\cdots\oplus\mathbb{Z}_pf_m$.
As a consequence, we arrive at \begin{align*}
                               &\Theta(\Gamma(\ell_1),\mathcal{C}_{\ell_1};\Gamma(\ell_2),\mathcal{C}_{\ell_2})[\psi](g)\\
                               \sim&
                               \Theta(\Gamma(L^\prime),\mathcal{C}_{L}|_{L^\prime};\Gamma(\ell_2^\prime),\mathcal{C}_{\ell_2}|_{\ell_2^\prime})
                               \circ\Theta(\Gamma(\ell_1^\prime),\mathcal{C}_{\ell_1}|_{\ell_1^\prime};\Gamma(L^\prime),\mathcal{C}_{L}|_{L^\prime})[\psi](g)\\
                                \sim&\int_{\ell_2^\prime/\ell_2^\prime\cap L^\prime}\mathcal{C}_{\ell_2}|_{\ell_2^\prime}((-u;1))\sum_{v\in L^\prime/L^\prime\cap\ell_1^\prime }\mathcal{C}_{L}|_{L^\prime}((-v;1))\psi((v;1)\cdot(u;1)\cdot g)d\mu_{\ell_2^\prime/\ell_2^\prime\cap L^\prime}\\
                                &=\int_{\ell_2^\prime}\mathcal{C}_{\ell_2}|_{\ell_2^\prime}((-u;1))\psi((u;1)\cdot g)d\mu_{\ell_2^\prime}.
                              \end{align*}
                              Thus, we complete the proof.
                              \end{proof}
\begin{definition}The isomorphism as the r.h.s. of \eqref{eq:p} is called the canonical isomorphism between the representation spaces $ \mathbb{H}(\Gamma(\ell_1),\mathcal{C}_{\ell_1})$ and $\mathbb{H}(\Gamma(\ell_2),\mathcal{C}_{\ell_2})$, and denoted by $\Theta_c(\Gamma(\ell_1),\mathcal{C}_{\ell_1};\Gamma(\ell_2),\mathcal{C}_{\ell_2})$. For a triple $(\ell_1,\ell_2,\ell_3)$ of Lagrangian subspaces in $(V,J)$ and the characters $\mathcal{C}^0_{\ell_i}((\ell_i;1))=1 (i=1,2,3)$, we have
\begin{align}
  &\Theta_c(\Gamma(\ell_3),\mathcal{C}^0_{\ell_3};\Gamma(\ell_1),\mathcal{C}^0_{\ell_1})\circ\Theta_c(\Gamma(\ell_2),\mathcal{C}^0_{\ell_2};\Gamma(\ell_3),\mathcal{C}^0_{\ell_3})\circ\Theta_c(\Gamma(\ell_1),\mathcal{C}^0_{\ell_1};\Gamma(\ell_2),\mathcal{C}^0_{\ell_2})
\nonumber\\
=&\alpha(\ell_1,\ell_2,\ell_3)Id,
\end{align}
and the coefficient $\alpha(\ell_1,\ell_2,\ell_3)\in\mathbf{S}^1$ is called the Maslov index associated with $(\ell_1,\ell_2,\ell_3)$, or equivalently, the Maslov index is determined by
\begin{align}
  \Theta_c(\Gamma(\ell_2),\mathcal{C}^0_{\ell_2};\Gamma(\ell_3),\mathcal{C}^0_{\ell_3})\circ\Theta_c(\Gamma(\ell_1),\mathcal{C}^0_{\ell_1};\Gamma(\ell_2),\mathcal{C}^0_{\ell_2})
=\alpha(\ell_1,\ell_2,\ell_3)\Theta_c(\Gamma(\ell_1),\mathcal{C}^0_{\ell_1};\Gamma(\ell_3),\mathcal{C}^0_{\ell_3}).
\end{align}
\end{definition}

\begin{proposition}
Let $\ell_1,\ell_2, \ell_3, \ell_4$ be four Lagrangian subspaces in $(V,J)$, then the Maslov index is given by
\begin{align*}
  \alpha(\ell_1,\ell_2,\ell_3)=\int_{v\in\ell_3/\ell_2\cap\ell_3,w\in\ell_2/\ell_1\cap\ell_2,v+w\in\ell_1}\chi(\frac{1}{2}J(w,v))d\mu_{\ell_3/\ell_2\cap\ell_3}d\mu_{\ell_2/\ell_2\cap\ell_1},
\end{align*}
and it has the following properties:\begin{itemize}
                                      \item (permutation relations) $\alpha(\ell_1,\ell_2,\ell_3)=\overline{\alpha(\ell_1,\ell_3,\ell_2)}$, $\alpha(\ell_1,\ell_2,\ell_3)=\alpha(\ell_2,\ell_3,\ell_1)$,
                                      \item (cocycle relations) $\alpha(\ell_1,\ell_2,\ell_3)\alpha(\ell_1,\ell_3,\ell_4)\alpha(\ell_2,\ell_4,\ell_3)\alpha(\ell_2,\ell_1,\ell_4)=1$.
                                    \end{itemize}
\end{proposition}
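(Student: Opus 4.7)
The plan is to derive the integral formula for $\alpha(\ell_1,\ell_2,\ell_3)$ by unfolding the defining identity of the canonical intertwining operators, and then to obtain the permutation and cocycle identities by formal manipulations of these operators.

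Applying the explicit formula \eqref{eq:p} for $\Theta_c$ with the trivial characters $\mathcal{C}^0_{\ell_i}$ and composing yields
\[
\bigl(\Theta_c(\Gamma(\ell_2),\mathcal{C}^0_{\ell_2};\Gamma(\ell_3),\mathcal{C}^0_{\ell_3})\circ\Theta_c(\Gamma(\ell_1),\mathcal{C}^0_{\ell_1};\Gamma(\ell_2),\mathcal{C}^0_{\ell_2})\bigr)[\psi](g)=\iint\chi(\tfrac12 J(w,v))\psi((w+v;1)g)\,dw\,dv,
\]
with $(v,w)\in\ell_3/\ell_2\cap\ell_3\times\ell_2/\ell_1\cap\ell_2$, using the Heisenberg product rule $(w;1)(v;1)=(w+v;\chi(\tfrac12 J(w,v)))$ together with the central character relation $\psi((0;\alpha)g)=\alpha\psi(g)$. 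To extract the scalar, perform the linear substitution $u=w+v$ in a symplectic basis adapted to the triple: the variable $u$ reproduces $\Theta_c(\Gamma(\ell_1),\mathcal{C}^0_{\ell_1};\Gamma(\ell_3),\mathcal{C}^0_{\ell_3})[\psi](g)$ after using $\mathcal{C}^0_{\ell_1}\equiv 1$ to transport the $\ell_1$-component of $u$ through $\psi$, while the remaining fibre integration is supported on the locus $w+v\in\ell_1$ and carries only the cocycle $\chi(\tfrac12 J(w,v))$. This yields the claimed formula for $\alpha(\ell_1,\ell_2,\ell_3)$.

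For the permutation relations, the measure normalisations in the Corollary arrange that $\Theta_c$ is unitary and $\Theta_c(\ell_j;\ell_i)=\Theta_c(\ell_i;\ell_j)^{-1}$. Multiplying the defining identity for $(\ell_1,\ell_2,\ell_3)$ on the left by $\Theta_c(\ell_3;\ell_1)$ gives $\Theta_c(\ell_3;\ell_1)\Theta_c(\ell_2;\ell_3)\Theta_c(\ell_1;\ell_2)=\alpha(\ell_1,\ell_2,\ell_3)\,\mathrm{Id}$; comparing with the analogous identity for the triple $(\ell_2,\ell_3,\ell_1)$ yields the cyclic symmetry $\alpha(\ell_2,\ell_3,\ell_1)=\alpha(\ell_1,\ell_2,\ell_3)$, and taking the inverse of the defining identity yields $\alpha(\ell_3,\ell_2,\ell_1)=\overline{\alpha(\ell_1,\ell_2,\ell_3)}$. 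Combining these gives $\alpha(\ell_1,\ell_2,\ell_3)=\overline{\alpha(\ell_1,\ell_3,\ell_2)}$.

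The cocycle identity follows from evaluating the triple product $\Theta_c(\ell_3;\ell_4)\Theta_c(\ell_2;\ell_3)\Theta_c(\ell_1;\ell_2)$ in two ways. Collapsing the right-hand pair via the defining identity for $(\ell_1,\ell_2,\ell_3)$ and then the resulting pair via $(\ell_1,\ell_3,\ell_4)$ gives $\alpha(\ell_1,\ell_2,\ell_3)\alpha(\ell_1,\ell_3,\ell_4)\Theta_c(\ell_1;\ell_4)$; collapsing the left-hand pair first via $(\ell_2,\ell_3,\ell_4)$ and then via $(\ell_1,\ell_2,\ell_4)$ gives $\alpha(\ell_2,\ell_3,\ell_4)\alpha(\ell_1,\ell_2,\ell_4)\Theta_c(\ell_1;\ell_4)$. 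Equating and applying the identities $\overline{\alpha(\ell_2,\ell_3,\ell_4)}=\alpha(\ell_2,\ell_4,\ell_3)$ and $\overline{\alpha(\ell_1,\ell_2,\ell_4)}=\alpha(\ell_2,\ell_1,\ell_4)$ from the permutation relations produces the stated four-term identity.

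The main technical obstacle is the first step: the substitution $u=w+v$ is transparent when the triple is pairwise transverse (so $\ell_2\oplus\ell_3=V$) and reduces $\alpha$ to a $p$-adic Gaussian integral, but in the presence of nontrivial intersections $\ell_i\cap\ell_j$ the Haar measures on $\ell_3/\ell_2\cap\ell_3$, $\ell_2/\ell_1\cap\ell_2$ and $\ell_3/\ell_1\cap\ell_3$ must be normalised compatibly to produce a scalar of modulus one, and the constraint $v+w\in\ell_1$ in the final formula must be read as integration against the induced Haar measure on that affine subvariety.
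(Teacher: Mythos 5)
Your proposal is correct and follows essentially the same route as the paper: unfold the integral kernels of the canonical intertwiners, identify the scalar, and then get the symmetry and cocycle identities by formal operator algebra. One point of difference is worth noting. The paper extracts the formula by composing all \emph{three} operators $\Theta_c(\Gamma(\ell_3),\mathcal{C}^0_{\ell_3};\Gamma(\ell_1),\mathcal{C}^0_{\ell_1})\circ\Theta_c(\Gamma(\ell_2),\mathcal{C}^0_{\ell_2};\Gamma(\ell_3),\mathcal{C}^0_{\ell_3})\circ\Theta_c(\Gamma(\ell_1),\mathcal{C}^0_{\ell_1};\Gamma(\ell_2),\mathcal{C}^0_{\ell_2})$, so that after regrouping the phase as $\chi(\tfrac{1}{2}J(w,v)+J(v+w,u)+\tfrac{1}{2}J(v+w,z))$ the integration over $u\in\ell_1/\ell_1\cap\ell_3$ of the character $u\mapsto\chi(J(v+w,u))$ produces a delta function supported exactly on $J(v+w,\cdot)|_{\ell_1}=0$, i.e.\ on $v+w\in\ell_1^{\perp}=\ell_1$ since $\ell_1$ is Lagrangian. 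In your two-fold composition the constraint $v+w\in\ell_1$ is asserted rather than derived; if you want to stay with the double composition you should justify it by applying $\Theta_c(\Gamma(\ell_3),\mathcal{C}^0_{\ell_3};\Gamma(\ell_1),\mathcal{C}^0_{\ell_1})$ to both sides, which reduces your version to the paper's and supplies the missing character-orthogonality step. Your derivations of the cyclic symmetry, the conjugation relation from inverses, and the four-term cocycle identity by evaluating $\Theta_c(\ell_3;\ell_4)\Theta_c(\ell_2;\ell_3)\Theta_c(\ell_1;\ell_2)$ in two ways are correct and in fact more detailed than the paper, which simply refers these properties to Zelenov; they do rest on $\Theta_c(\ell_2;\ell_1)=\Theta_c(\ell_1;\ell_2)^{-1}$, equivalently $\alpha(\ell_1,\ell_2,\ell_1)=1$, which holds with the unitary normalisation of the measures but deserves the one-line check you implicitly assume.
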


\begin{proof}According to the definitions, for any $\psi\in \mathbb{H}(\Gamma(\ell_1),\mathcal{C}^0_{\ell_1})$ and $g=(z;\alpha)\in H(V)$, we have
\begin{align*}
   &\Theta_c(\Gamma(\ell_3),\mathcal{C}^0_{\ell_3};\Gamma(\ell_1),\mathcal{C}^0_{\ell_1})\circ\Theta_c(\Gamma(\ell_2),\mathcal{C}^0_{\ell_2};\Gamma(\ell_3),\mathcal{C}^0_{\ell_3})\circ\Theta_c(\Gamma(\ell_1),\mathcal{C}^0_{\ell_1};\Gamma(\ell_2),\mathcal{C}^0_{\ell_2})[\psi](g)\\
   =&\int_{\ell_1/\ell_1\cap\ell_3}d\mu_{\ell_1/\ell_1\cap\ell_3}\int_{\ell_3/\ell_2\cap\ell_3}d\mu_{\ell_3/\ell_2\cap\ell_3}\int_{\ell_2/\ell_1\cap\ell_2}d\mu_{\ell_2/\ell_2\cap\ell_1}\\
   &\ \ \ \ \ \ \ \ \ \ \ \ \ \ \ \ \ \alpha\chi(\frac{1}{2}J(u,z)+\frac{1}{2}J(v,u+z)+\frac{1}{2}J(w,u+v+z))\psi(u+v+w+z)\\
   =&\int_{\ell_1/\ell_1\cap\ell_3}d\mu_{\ell_1/\ell_1\cap\ell_3}\int_{\ell_3/\ell_2\cap\ell_3}d\mu_{\ell_3/\ell_2\cap\ell_3}\int_{\ell_2/\ell_1\cap\ell_2}d\mu_{\ell_2/\ell_2\cap\ell_1}\\
  &\ \ \ \ \ \ \ \ \ \ \ \ \ \ \ \ \  \alpha\chi(\frac{1}{2}J(w,v)+J(v+w,u)+\frac{1}{2}J(v+w,z))\psi(v+w+z)\\
   =&(\int_{v\in\ell_3/\ell_2\cap\ell_3,w\in\ell_2/\ell_1\cap\ell_2,v+w\in\ell_1}\chi(\frac{1}{2}J(w,v))d\mu_{\ell_3/\ell_2\cap\ell_3}d\mu_{\ell_2/\ell_2\cap\ell_1})\psi(g),
   \end{align*}
   where  $\psi$ has been viewed as a function on $V$ in the first and the second equalities. The properties of Maslov index can be easily deduced from the definition\cite{z}.
\end{proof}
\begin{example}Let us consider an  example where $\ell_1,\ell_2, \ell_3\in Lag(1,V)$. There are two non-trivial  cases.

(i) If $(\ell_1,\ell_2, \ell_3)=(\ell_a,\ell_b, \ell_c), a\neq b\neq c$, we can directly calculate
\begin{align*}
  \alpha(\ell_a,\ell_b, \ell_c)&=\int_{v\in\ell_c,w\in\ell_b,v+w\in\ell_a}\chi(\frac{1}{2}J(w,v))d\mu_{\ell_c}d\mu_{\ell_b}\\
  &=\int_{\mathbb{Q}_p}\chi(\frac{(c-b)(a-c)}{2(b-a)}x^2)dx/|\int_{\mathbb{Q}_p}\chi(\frac{(c-b)(a-c)}{2(b-a)}x^2)dx|\\
  &=\left\{
      \begin{array}{ll}
        \lambda_p(\frac{(c-b)(a-c)}{2(b-a)}), & \hbox{$p>3$;} \\
       \frac{1}{2} \lambda_2(\frac{(c-b)(a-c)}{2(b-a)}), & \hbox{$p=2$.}
      \end{array}
    \right.
\end{align*}

(ii) If $(\ell_1,\ell_2, \ell_3)=(\ell_a,\ell_b, \ell_*), a\neq b$, we similarly have
\begin{align*}
  \alpha(\ell_a,\ell_b, \ell_*)
  &=\int_{\mathbb{Q}_p}\chi(\frac{a-b}{2}x^2)dx/|\int_{\mathbb{Q}_p}\chi(\frac{a-b}{2}x^2)dx|\\
  &=\left\{
      \begin{array}{ll}
        \lambda_p(\frac{a-b}{2}), & \hbox{$p>3$;} \\
       \frac{1}{2} \lambda_2(\frac{a-b}{2}), & \hbox{$p=2$.}
      \end{array}
    \right.
\end{align*}

\end{example}

\end{document}